\newtheorem{thm}{Theorem}[section]
\newtheorem{cor}[thm]{Corollary}
\newtheorem{lem}[thm]{Lemma}
\theoremstyle{definition}
\theoremstyle{remark}
\newtheorem*{ex}{Example}
\numberwithin{equation}{section}
\newcommand{\BibTeX}{B\kern-0.1emi\kern-0.017emb\kern-0.15em\TeX}
\newcommand{\XYpic}{$\mathrm{X\kern-0.3em\raisebox{-0.18em}{Y}}$-$\mathrm{pic}\,$}
\newcommand{\cl}{C \kern -0.1em \ell}  
\newcommand{\BC}{\mathbb{C}}
\newcommand{\ed}{\end{document}}
\def\BC{{\mathbb C}}
\def\H{{\rm H}}
\def\mod{{\rm \;mod\; }}
\def\Mat{{\rm Mat}}
\def\Det{{\rm Det}}
\def\tr{{\rm tr}}
\newcommand{\C}{C \kern -0.1em \ell}
\begin{document}

%
%
%
%
%
%
%
%
%

\title[On Unitary Groups in Ternary and Generalized Clifford Algebras]
 {On Unitary Groups in Ternary and \\Generalized Clifford Algebras}
\author[D. Shirokov]{Dmitry Shirokov}
\address{%
HSE University\\
Moscow 101000\\
Russia
\medskip}
\address{
and
\medskip}
\address{
Institute for Information Transmission Problems of the Russian Academy of Sciences \\
Moscow 127051 \\
Russia}
\email{dm.shirokov@gmail.com}
\subjclass{15A66, 11E88}
\keywords{generalized Clifford algebra, ternary Clifford algebra, unitary group, Hermitian conjugation, determinant, characteristic polynomial, inverse}
\date{\today}
\dedicatory{Last Revised:\\ \today}
\begin{abstract}
We discuss a generalization of Clifford algebras known as generalized Clifford algebras (in particular, ternary Clifford algebras). In these objects, we have a fixed higher-degree form (in particular, a ternary form) instead of a quadratic form in ordinary Clifford algebras. We present a natural realization of unitary Lie groups, which are important in physics and other applications, using only operations in generalized Clifford algebras and without using the corresponding matrix representations. Basis-free definitions of the determinant, trace, and characteristic polynomial in generalized Clifford algebras are introduced. Explicit formulas for all coefficients of the characteristic polynomial and inverse in generalized Clifford algebras are presented. The operation of Hermitian conjugation (or Hermitian transpose) in generalized Clifford algebras is introduced without using the corresponding matrix representations. 
\end{abstract}
\label{page:firstblob}
\maketitle

\section{Introduction}\label{secIntro}

Over the past few years, a number of papers on ternary Clifford algebras with applications to physics and analysis have been published \cite{Paula, Paula2, Kerner, ShirT,Volk,Dubr,Masl}. Compared to ordinary (quadratic) Clifford algebras, instead of the quadratic form, a ternary form is fixed in ternary Clifford algebras. The formalism of generalized Clifford algebras $\cl^{\frac{1}{m}}_d$ (see \cite{Mor,Ram,Jag,Lin,Morris}), where we have a fixed $m$-ary form generalizes these concepts. From our point of view, ternary and generalized Clifford algebras are powerful mathematical tools that can be useful in algebra \cite{Paula, Mor, Morris, ShirT}, analysis \cite{Paula2}, physics \cite{Dubr, Jag, Kerner, Lin, Masl, Ram, Volk}, computer science, and engineering. See also the survey of applications \cite{Kwasn}.

The main topic of this paper is the study of unitary groups in the formalism of generalized Clifford algebras, which are important for all sorts of physical applications (for example, the group ${\rm SU}(3)$ is used to describe strong interactions in quantum chromodynamics). In this paper, we present an explicit realization of the unitary and special unitary Lie groups and the corresponding Lie algebras in the generalized Clifford algebras $\cl^{\frac{1}{m}}_d$. To do this, we introduce the concepts of Hermitian conjugation, determinant, and characteristic polynomial in $\cl^{\frac{1}{m}}_d$ without using the corresponding matrix representations. We present basis-free formulas for the inverse and all coefficients of the characteristic polynomial coefficients in the generalized Clifford algebras $\cl^{\frac{1}{m}}_d$. Theorems \ref{thHermG}, \ref{thUnitG}, \ref{FLG}, \ref{thunderG}, \ref{thSunitG}, Corollaries \ref{cor1}, \ref{cor2}, \ref{corGG}, \ref{corG}, \ref{cor3}, and Lemmas \ref{lem1}, \ref{lemG} are new.

The paper is organized as follows. In Section \ref{sec2}, we discuss the generalized Clifford algebras $\cl^{\frac{1}{m}}_d$ and their special case, ternary Clifford algebras $\cl^{\frac{1}{3}}_d$. In Section \ref{sec3}, we discuss matrix representations of $\cl^{\frac{1}{m}}_d$. In Section \ref{sec4}, we introduce the concepts of grades and projection operators in $\cl^{\frac{1}{m}}_d$. In Section \ref{sec5}, the operation of Hermitian conjugation (Hermitian transpose) in $\cl^{\frac{1}{m}}_d$ is introduced without using the corresponding matrix representations. In Section \ref{sec6}, we present an explicit realization of the unitary Lie groups and the corresponding Lie algebras in $\cl^{\frac{1}{m}}_d$ using the results of the previous sections of this paper. In Section \ref{sec7}, basis-free definitions of the determinant, trace, and characteristic polynomial in $\cl^{\frac{1}{m}}_d$ are introduced. Explicit formulas are presented for all coefficients of the characteristic polynomial and inverse in $\cl^{\frac{1}{m}}_d$. In Section \ref{sec8}, we present an explicit realization of the special unitary Lie groups and the corresponding Lie algebras in $\cl^{\frac{1}{m}}_d$. We present an explicit basis of the Lie algebra $\mathfrak{su}(3)$ in $\cl^{\frac{1}{3}}_{2}$ and its connection with the well-known Gell-Mann basis. The conclusions follow in Section \ref{sec9}. 

Note that the results on realization of the Lie group ${\rm SU}(3)$ and the corresponding Lie algebra $\mathfrak{su}(3)$ in the ternary Clifford algebra with two generators $\cl^{\frac{1}{3}}_{2}$ are presented in a short note (13 pages) in the Conference Proceedings \cite{su3}.
The current paper generalizes these results to the
case of unitary Lie groups of higher dimensions using the formalism of generalized Clifford algebras. Some results for the case $\cl^{\frac{1}{3}}_{2}$ are presented in this paper as examples to facilitate a better understanding of the more general case.

\section{Generalized Clifford algebras}
\label{sec2}

Let us consider the generalized (complex) Clifford algebra $\cl^{\frac{1}{m}}_d$ \cite{Mor,Ram,Jag,Lin} with $d$ generators that satisfy
\begin{eqnarray}
&&e_j^m=e,\qquad j=1, \ldots, d,\\
&&e_k e_l=\omega e_l e_k,\qquad k<l,\qquad \omega:=e^{\frac{2\pi i}{m}},
\end{eqnarray}
i.e. $\omega$ is a primitive $m$-th root of unity (the identity element) $e\equiv 1$ and $i$ is the imaginary unit.

We have $\dim(\cl^{\frac{1}{m}}_d)=m^d$. An arbitrary element $U\in\cl^{\frac{1}{m}}_d$ has the form
\begin{eqnarray}
U&=&\sum_{j_1, \ldots, j_d=0}^{m-1} u_{j_1 \ldots j_d} e_1^{j_1}\cdots e_d^{j_d}\label{UUG}\\
&=&u_{0\ldots 0}e+u_{10\ldots 0}e_1+ \cdots+u_{0\ldots 0 1} e_d+\cdots+u_{m-1\ldots m-1}e_1^{m-1} \cdots e_d^{m-1},\nonumber
\end{eqnarray}
where $u_{j_1 \ldots j_d}\in{\mathbb C}$.

In the case $m=2$, we get the standard (quadratic) Clifford algebra $\cl_d:=\cl^{\frac{1}{2}}_d$ with
\begin{eqnarray}
e_j^2=e,\qquad j=1, \ldots, d,\\
e_k e_l=-e_l e_k,\qquad k<l.
\end{eqnarray}
We have $\dim(\cl_d)=2^d$. An arbitrary element $U\in\cl_d$ has the form
\begin{eqnarray}
U&=&\sum_{j_1, \ldots, j_d=0}^{1} u_{j_1 \ldots j_d} e_1^{j_1}\cdots e_d^{j_d}\nonumber\\
&=&u_{0\ldots 0}e+u_{10\ldots 0}e_1+ \cdots+u_{0\ldots 0 1} e_d+\cdots+u_{1\ldots 1}e_1 \cdots e_d.
\end{eqnarray}

In the case $m=3$, we get the ternary Clifford algebra $\cl^{\frac{1}{3}}_d$ with
\begin{eqnarray}
&&e_j^3=e,\qquad j=1, \ldots, d,\\
&&e_k e_l=\omega e_l e_k,\qquad k<l,\qquad \omega:=e^{\frac{2\pi i}{3}}.
\end{eqnarray}
We have $\dim(\cl^{\frac{1}{3}}_d)=3^d$. An arbitrary element $U\in\cl^{\frac{1}{3}}_d$ has the form
\begin{eqnarray}
U&=&\sum_{j_1, \ldots, j_d=0}^{2} u_{j_1 \ldots j_d} e_1^{j_1}\cdots e_d^{j_d}\nonumber\\
&=&u_{0\ldots 0}e+u_{10\ldots 0}e_1+ \cdots+u_{0\ldots 0 1} e_d+\cdots+u_{2\ldots 2}e_1^{2} \cdots e_d^{2}.
\end{eqnarray}

\begin{ex}
Let us consider the particular case of ternary Clifford algebra with two generators 
$\cl^{\frac{1}{3}}_2$. The generators $e_1$ and $e_2$ satisfy
\begin{eqnarray}
&&e_1^3=e_2^3=e,\\
&&e_1 e_2=\omega e_2 e_1,\qquad \omega:=e^{\frac{2\pi i}{3}},
\end{eqnarray}
In this case, we have
$$
\omega^3=1,\qquad \omega^2+\omega+1=0,\qquad \overline{\omega}=\omega^2,\qquad \omega-\omega^2=i\sqrt{3}.
$$
An arbitrary element $U\in\cl^{\frac{1}{3}}_2$ has the form
\begin{eqnarray}
U=\sum_{j,k=0}^2 u_{jk} e_1^j e_2^k=u_{00}e+u_{10}e_1+u_{01}e_2+u_{20}e_1^2+u_{02}e_2^2+u_{11}e_1e_2\label{UU}\\
+u_{21}e_1^2 e_2+u_{12} e_1 e_2^2+u_{22}e_1^2 e_2^2,\qquad u_{jk}\in\mathbb{C}.
\end{eqnarray}
The multiplication table in $\cl^{\frac{1}{3}}_2$ is Table \ref{tab:multtable}.

\begin{table}[ht]
    \centering
    \!\begin{tabular}{c|cccccccc}
       1st $\backslash$ 2nd & $e_1$ & $e_2$ & $e_1^2$ & $e_2^2$ & $e_1 e_2$ & $e_1^2 e_2$ & $e_1 e_2^2$ & $e_1^2 e_2^2$ \\ \hline
       $e_1$  & $e_1^2$ & $e_1 e_2$ & $e$ & $e_1 e_2^2$ & $e_1^2 e_2$ & $e_2$ & $e_1^2 e_2^2$ & $e_2^2$\\
        $e_2$ & $\omega^2 e_1 e_2$ & $e_2^2$ & $\omega e_1^2 e_2$ & $e$ & $\omega^2 e_1 e_2^2$ & $\omega e_1^2 e_2^2$ & $\omega^2 e_1$ & $\omega e_1^2$\\
       $e_1^2$  & $e$ & $e_1^2 e_2$ & $e_1$ & $e_1^2 e_2^2$ & $e_2$ & $e_1 e_2$ & $e_2^2$ & $e_1 e_2^2$\\
       $e_2^2$  & $\omega e_1 e_2^2$ & $e$ & $\omega^2 e_1^2 e_2^2$ & $e_2$ & $\omega e_1$ & $\omega^2 e_1^2$ & $\omega e_1 e_2$ & $\omega^2 e_1^2 e_2$\\
       $e_1 e_2$  & $\omega^2 e_1^2 e_2$ & $e_1 e_2^2$ & $\omega e_2$ & $e_1$ & $\omega^2 e_1^2 e_2^2$ & $\omega e_2^2$ & $\omega^2 e_1^2$ & $\omega e$\\
       $e_1^2 e_2$  & $\omega^2 e_2$ & $e_1^2 e_2^2$ & $\omega e_1 e_2$ & $e_1^2$ & $\omega^2 e_2^2$ & $\omega e_1 e_2^2$ & $\omega^2 e$ & $\omega e_1$\\
        $e_1 e_2^2$ & $\omega e_1^2 e_2^2$ & $e_1$ & $\omega^2 e_2^2$ & $e_1 e_2$ & $\omega e_1^2$ & $\omega^2 e$ & $\omega e_1^2 e_2$ & $\omega^2 e_2$\\
        $e_1^2 e_2^2$ & $\omega e_2^2$ & $e_1^2$ & $\omega^2 e_1 e_2^2$ & $e_1^2 e_2$ & $\omega e$ & $\omega^2 e_1$ & $\omega e_2$ & $\omega^2 e_1 e_2$\\
    \end{tabular}
    \caption{Multiplication table in $\cl^{\frac{1}{3}}_2$.}
    \label{tab:multtable}
\end{table}

\end{ex}

\section{Matrix representations}\label{secmatrG}
\label{sec3}

We have the following isomorphisms (faithful representations) of the generalized Clifford algebras $\cl_d^{\frac{1}{m}}$ and the complex matrix algebras \cite{Morris}
\begin{eqnarray}
\!\!\!\!\!\!\!\!\!\!\!\!\!\!\!\!&&\gamma:\cl_d^{\frac{1}{m}}\to \label{gammaG}\\
\!\!\!\!\!\!\!\!\!\!\!\!\!\!\!\!&&\to\begin{cases}
\Mat(m^{\frac{d}{2}}, \BC),& \mbox{if $d$ is even;}\\
\underbrace{\Mat(m^{\frac{d-1}{2}}, \BC)\oplus\Mat(m^{\frac{d-1}{2}}, \BC)\oplus\cdots\oplus\Mat(m^{\frac{d-1}{2}}, \BC)}_{m\,\,\text{times}},& \mbox{if $d$ is odd.}
\end{cases}\nonumber
\end{eqnarray}
We denote the size of the corresponding matrices by
\begin{eqnarray}
N:=m^{[\frac{d+1}{2}]}=
\begin{cases}
m^{\frac{d}{2}},& \mbox{if $d$ is even};\\
m^{\frac{d+1}{2}},& \mbox{if $d$ is odd,}
\end{cases}
\end{eqnarray}
where $[\frac{d+1}{2}]$ is the integer part of $\frac{d+1}{2}$.

Let us fix one of such representations $\gamma$; we denote it by $\beta$. For the identity element $e$, we use the identity matrix $\beta(e)=I_{N}$ of the corresponding size. For the product of generators, we always use the matrix $\beta(e_{1}^{j_1}\cdots e_{d}^{j_d})=\beta(e_1)^{j_1}\cdots \beta(e_d)^{j_d}$. 
Thus, we only need to fix the matrices $\beta(e_i)$, $i=1, \ldots, d$.

Let us fix the following matrices  of size $m$ (see \cite{Morris})
\begin{eqnarray}
P:=\begin{bmatrix}
        0 & 1 & 0 & \ldots & 0\\
        0 & 0 & 1 & \ldots & 0\\
        \ldots & \ldots & \ldots & \ldots & \ldots \\
        0 & 0 & 0 & \ldots & 1\\
        1 & 0 & 0 & \ldots & 0
    \end{bmatrix};\quad 
Q:=\begin{bmatrix}
        0 & \omega & 0 & \ldots & 0\\
        0 & 0 & \omega^2 & \ldots & 0\\
        \ldots & \ldots & \ldots & \ldots & \ldots \\
        0 & 0 & 0 & \ldots & \omega^{m-1}\\
        1 & 0 & 0 & \ldots & 0
    \end{bmatrix}\label{PP}
\end{eqnarray}
for odd $m$ and
\begin{eqnarray}
Q:=\begin{bmatrix}
        0 & \zeta & 0 & \ldots & 0\\
        0 & 0 & \zeta^3 & \ldots & 0\\
        \ldots & \ldots & \ldots & \ldots & \ldots \\
        0 & 0 & 0 & \ldots & \zeta^{2m-3}\\
        \zeta^{2m-1} & 0 & 0 & \ldots & 0
    \end{bmatrix}\label{QQ}
\end{eqnarray}
for even $m$, where $\zeta$ is a primitive $2m$-th root of unity such that $\zeta^2=\omega$. Note that the matrices $P$ and $Q$ satisfy
\begin{eqnarray}
P^m=I_m,\qquad Q^m=I_m,\qquad PQ=\omega QP.
\end{eqnarray}
Let
\begin{eqnarray}
R:=\begin{cases}
P^{m-1}Q, & \mbox{if $m$ is odd;}\\
\zeta P^{m-1} Q, & \mbox{if $m$ is even},
\end{cases}\label{RR}
\end{eqnarray}
which is equal to the diagonal matrix ${\rm diag}(1, \omega, \omega^2, \ldots, \omega^{m-1})$ in both cases. Note that the matrices $P$, $Q$, and $R$ satisfy
\begin{eqnarray}
R^m=I_m,\qquad PR=\omega RP,\qquad QR=\omega RQ.
\end{eqnarray}
The matrices $P$ and $R$ are called {\it shift and clock matrices} in the literature, respectively.

Let us consider the algebra $\cl_d^{\frac{1}{m}}$ with even $d$. Then we can take
\begin{eqnarray}
\beta_i:=\beta(e_i)=\begin{cases}
\underbrace{R\otimes R \otimes \cdots \otimes R}_{i-1} \otimes P \otimes \underbrace{I \otimes \cdots \otimes I}_{\frac{d}{2}-i}, & i=1, 2, \ldots, \frac{d}{2},\\
\underbrace{R\otimes R \otimes \cdots \otimes R}_{i-\frac{d}{2}-1} \otimes Q \otimes \underbrace{I \otimes \cdots \otimes I}_{d-i}, & i=\frac{d}{2}+1, \frac{d}{2}+2, \ldots, d,
\end{cases}\label{betaG}
\end{eqnarray}
where each product has $\frac{d}{2}$ factors.

For the algebra $\cl_{d+1}^{\frac{1}{m}}$, where $d+1$ is odd, we can take 
\begin{eqnarray}
\beta(e_i)={\rm diag}(\beta_i, \omega\beta_i,\ldots, \omega^{m-1}\beta_i),\qquad i=1, \ldots, d+1,\label{betaGG}
\end{eqnarray}
where $\beta_{i}$, $i=1, \ldots, d$ are defined above and $\beta_{d+1}:=\underbrace{R\otimes R \otimes \cdots \otimes R}_{\frac{d}{2}}$.

Here we use the fact that the sets
\begin{eqnarray}
\omega^j \beta_i,\quad i=1, 2, \ldots, d+1,\quad j=0, 1, \ldots, m-1
\end{eqnarray}
generate $m$ inequivalent representations of $\cl_{d+1}^{\frac{1}{m}}$ with odd $d+1$.

\begin{ex}
In the case of the ternary Clifford algebra $\cl^{\frac{1}{3}}_2$ with two generators, we have the isomorphism 
\begin{eqnarray}
\gamma: \cl^{\frac{1}{3}}_2 \to {\rm Mat}(3, \mathbb{C}).\label{gamma}
\end{eqnarray}
Let us fix one specific matrix representation $\beta$ of the type (\ref{gamma}). We can take the following identity, shift, and clock matrices respectively:
\begin{eqnarray}
\beta(e)=
    \left[\begin{tabular}{ccc}
        $1$ & $0$ & $0$\\
        $0$ & $1$ & $0$\\
        $0$ & $0$ & $1$\\
    \end{tabular}\right],\,\,
    \beta(e_1)=
    \left[\begin{tabular}{ccc}
        $0$ & $1$ & $0$\\
        $0$ & $0$ & $1$\\
        $1$ & $0$ & $0$\\
    \end{tabular}\right],\,\,
    \beta(e_2)=
    \left[\begin{tabular}{ccc}
        $1$ & $0$ & $0$\\
        $0$ & $\omega$ & $0$\\
        $0$ & $0$ & $\omega^2$\\
    \end{tabular}\right].\label{beta}
\end{eqnarray}
Using 
\begin{eqnarray}
\beta(UV)=\beta(U) \beta(V),\qquad \forall U, V\in \cl^{\frac{1}{3}}_2,
\end{eqnarray}
we get
\begin{eqnarray}
   &&\beta(e_1^2)=
    \left[\begin{tabular}{ccc}
        $0$ & $0$ & $1$\\
        $1$ & $0$ & $0$\\
        $0$ & $1$ & $0$\\
    \end{tabular}\right],\,\,
    \beta(e_2^2)=
    \left[\begin{tabular}{ccc}
        $1$ & $0$ & $0$\\
        $0$ & $\omega^2$ & $0$\\
        $0$ & $0$ & $\omega$\\
    \end{tabular}\right],\nonumber\\
    &&\beta(e_{1}e_{2})=
    \left[\begin{tabular}{ccc}
        $0$ & $\omega$ & $0$\\
        $0$ & $0$ & $\omega^2$\\
        $1$ & $0$ & $0$\\
    \end{tabular}\right],\,\,
    \beta(e_1^2e_2)=
    \left[\begin{tabular}{ccc}
        $0$ & $0$ & $\omega^2$\\
       $1$ & $0$ & $0$\\
        $0$ & $\omega$ & $0$\\
    \end{tabular}\right],\\
    &&\beta(e_1 e_2^2)=
    \left[\begin{tabular}{ccc}
        $0$ & $\omega^2$ & $0$\\
        $0$ & $0$ & $\omega$\\
        $1$ & $0$ & $0$\\
    \end{tabular}\right],\,\,
    \beta(e_1^2 e_2^2)=
    \left[\begin{tabular}{ccc}
        $0$ & $0$ & $\omega$\\
        $1$ & $0$ & $0$\\
        $0$ & $\omega^2$ & $0$\\
    \end{tabular}\right].\nonumber
\end{eqnarray}
For an arbitrary $U\in\cl^{\frac{1}{3}}_2$ in the form (\ref{UU}), we have
\begin{eqnarray}
\beta(U)=\left[\begin{tabular}{ccc}
        $u_{00}+u_{01}+u_{02}\quad$ & $u_{10}+\omega u_{11}+\omega^2 u_{12}\quad$& $u_{20}+\omega u_{22}+\omega^2 u_{21}$\\
        $u_{20}+u_{21}+u_{22}\quad$ & $u_{00}+\omega u_{01}+\omega^2 u_{02}\quad$ & $u_{10}+\omega u_{12}+ \omega^2 u_{11}$ \\
        $u_{10}+u_{11}+u_{12}\quad$ & $u_{20}+\omega u_{21}+\omega^2 u_{22}\quad$ & $u_{00}+\omega u_{02}+\omega^2 u_{01}$\\
    \end{tabular}\right].\label{fixed2}
\end{eqnarray}
\end{ex}

\section{Subspaces of different grades and projection operators}
\label{sec4}

The generalized Clifford algebra $\cl^{\frac{1}{m}}_d$ is a direct sum of the subspaces of grades $k=0, 1, 2, 3, \ldots, d(m-1)$:
\begin{eqnarray}
\cl^{\frac{1}{m}}_d=\bigoplus_{k=0}^{d(m-1)} \cl^{\frac{1}{m}, k}_d,
\end{eqnarray}
where $\cl^{\frac{1}{m}, k}_d$ is a linear span of the basis elements $e_1^{j_1} e_2^{j_2} \cdots e_d^{j_d}$ with $j_1+j_2+\cdots+j_d=k$.

Let us denote the projection onto the subspace $\cl^{\frac{1}{m}, k}_d$ of grade $k$ by $\langle U \rangle_k$. For example, we have
\begin{eqnarray}
\langle U \rangle_0= u_{00\ldots 0}e \equiv u_{00 \ldots 0} \in \BC,\qquad \forall U\in \cl^{\frac{1}{m}}_d.
\end{eqnarray}
The operations $\langle U \rangle_k$, $k=0, 1, 2, \ldots, d(m-1)$, are linear:
\begin{eqnarray}
\langle \lambda U+\mu V \rangle_k=\lambda \langle U\rangle_k+\mu \langle V \rangle_k,\qquad \forall U, V\in\cl^{\frac{1}{m}}_d,\qquad \forall \lambda, \mu\in\BC. \label{linearG}
\end{eqnarray}

\begin{ex}
The ternary Clifford algebra $\cl^{\frac{1}{3}}_2$ is a direct sum of the five subspaces of grades $l=0, 1, 2, 3, 4$:
\begin{eqnarray}
\cl^{\frac{1}{3}}_2=\cl^{\frac{1}{3}, 0}_2\oplus \cl^{\frac{1}{3}, 1}_2\oplus \cl^{\frac{1}{3}, 2}_2\oplus\cl^{\frac{1}{3}, 3}_2\oplus\cl^{\frac{1}{3}, 4}_2,
\end{eqnarray}
where $\cl^{\frac{1}{3}, l}_2$ is a linear span of the basis elements $e_1^j e_2^k$ with $j+k=l$.
\end{ex}

\section{Hermitian conjugation}\label{sechermG}
\label{sec5}

Let us consider the following operation in $\cl^\frac{1}{m}_d$:
\begin{eqnarray}
\overline{U}:=U|_{u_{j_1 j_2 \ldots j_d} \to \bar{u}_{j_1 j_2 \ldots j_d},\, e_1^{j_1}e_2^{j_2}\cdots e_d^{j_d} \to (e_1^{j_1}e_2^{j_2}\cdots e_d^{j_d})^{-1}},\qquad \forall U\in\cl^\frac{1}{m}_d.\label{herm1G}
\end{eqnarray}
For an arbitrary $U$ of the form (\ref{UUG}), we get
\begin{eqnarray}
\overline{U}=\sum_{j_1, \ldots, j_d=0}^{m-1} \bar{u}_{j_1 \ldots j_d} (e_1^{j_1}\cdots e_d^{j_d})^{-1}=\overline{u}_{0\ldots 0}e+\overline{u}_{10\ldots 0}e_1^{m-1}+ \cdots\nonumber\\
+\overline{u}_{0\ldots 0 1} e_d^{m-1}+\cdots+\overline{u}_{m-1\ldots m-1}\omega^{\frac{(m-1)(d-1)d}{2}} e_1 \cdots e_d.\label{z3}
\end{eqnarray}
\begin{ex} In the case $\cl^\frac{1}{3}_2$, for an arbitrary $U$ of the form (\ref{UU}), we have
\begin{eqnarray}
\overline{U}=\sum_{j,k=0}^2 \overline{u_{jk}} (e_1^{j}e_2^{k})^{-1}=\overline{u_{00}}e+\overline{u_{10}}e_1^2+\overline{u_{01}}e_2^2+\overline{u_{20}}e_1+\overline{u_{02}}e_2+\nonumber\\
+\overline{u_{11}}\omega^2 e_1^2e_2^2+\overline{u_{21}}\omega e_1 e_2^2+\overline{u_{12}} \omega e_1^2 e_2+\overline{u_{22}}\omega^2 e_1 e_2. \label{UUh}
\end{eqnarray}
\end{ex}

This operation has the properties
\begin{eqnarray}
&&\overline{\overline{U}}=U,\qquad \overline{U+V}=\overline{U}+\overline{V},\qquad \overline{\alpha U}=\overline{\alpha}\, \overline{U},\label{pr1G}\\
&&\overline{UV}=\overline{V}\,\,\overline{U},\qquad \forall U, V\in\cl^{\frac{1}{m}}_d,\qquad \forall \alpha\in\BC.\label{pr2G}
\end{eqnarray}
We call the operation (\ref{herm1G}) {\it Hermitian conjugation (or Hermitian transpose) in the generalized Clifford algebra $\cl^{\frac{1}{m}}_d$} based on the following theorem. The similar operation is considered in ordinary (quadratic) Clifford algebras in \cite{Herm}.
\begin{thm}\label{thHermG}
For the fixed matrix representation $\beta$ (\ref{betaG}), (\ref{betaGG}), we have 
\begin{eqnarray}
(\beta(U))^{\rm H}=\beta(\overline{U}),\qquad \forall U\in\cl^{\frac{1}{m}}_d,\label{thhermG}
\end{eqnarray}
where we denote the Hermitian transpose of a matrix by ${\rm H}$.
\end{thm}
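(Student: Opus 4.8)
The plan is to exploit the fact that both sides of (\ref{thhermG}), regarded as maps $\cl^{\frac{1}{m}}_d \to \Mat(N,\BC)$, are conjugate-linear anti-homomorphisms, so that the identity need only be checked on the generators $e_i$. Introduce $\Phi(U):=\beta(\overline{U})$ and $\Psi(U):=(\beta(U))^{\rm H}$. First I would record that $\Phi$ is conjugate-linear and anti-multiplicative: this follows from the properties (\ref{pr1G})--(\ref{pr2G}) of $\overline{\cdot}$ combined with the linearity and multiplicativity of the representation $\beta$, since $\Phi(UV)=\beta(\overline{UV})=\beta(\overline{V}\,\overline{U})=\beta(\overline{V})\beta(\overline{U})=\Phi(V)\Phi(U)$ and $\Phi(\lambda U)=\bar\lambda\,\Phi(U)$. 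Likewise $\Psi$ is conjugate-linear and anti-multiplicative by the standard matrix rules $(\lambda A+\mu B)^{\rm H}=\bar\lambda A^{\rm H}+\bar\mu B^{\rm H}$ and $(AB)^{\rm H}=B^{\rm H}A^{\rm H}$. Since every basis element $e_1^{j_1}\cdots e_d^{j_d}$ is a product of generators and every $U$ is a $\BC$-linear combination of such basis elements, two conjugate-linear anti-homomorphisms that agree on $e_1,\ldots,e_d$ agree on all of $\cl^{\frac{1}{m}}_d$. Thus the theorem reduces to the single identity $(\beta(e_i))^{\rm H}=\beta(\overline{e_i})$ for each $i$.

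Next I would evaluate both sides of this reduced identity. As $e_i$ is a single generator with coefficient $1$, the definition (\ref{herm1G}) gives $\overline{e_i}=e_i^{-1}=e_i^{m-1}$, so the right-hand side equals $\beta(e_i^{m-1})=(\beta(e_i))^{m-1}$. Because $(\beta(e_i))^m=\beta(e_i^m)=\beta(e)=I_N$, the matrix $\beta(e_i)$ has finite order and $(\beta(e_i))^{m-1}=(\beta(e_i))^{-1}$. Hence the reduced identity becomes $(\beta(e_i))^{\rm H}=(\beta(e_i))^{-1}$, that is, the assertion that each generator matrix $\beta(e_i)$ is unitary.

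It therefore remains to verify unitarity of every $\beta(e_i)$, and this is the only place where the explicit form of the representation is used. The shift matrix $P$ is a permutation matrix, hence unitary; the clock matrix $R={\rm diag}(1,\omega,\ldots,\omega^{m-1})$ is diagonal with entries of modulus one, hence unitary; and $Q$, in both the odd-$m$ form (\ref{PP}) and the even-$m$ form (\ref{QQ}), is a monomial matrix whose nonzero entries are powers of $\omega$, respectively $\zeta$, all of modulus one, hence unitary. In the even-$d$ case the generator matrix $\beta_i$ in (\ref{betaG}) is a tensor product of copies of $R$, one factor $P$ or $Q$, and identity matrices; since a tensor product of unitary matrices is unitary, $\beta_i$ is unitary. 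In the odd case (\ref{betaGG}), $\beta(e_i)={\rm diag}(\beta_i,\omega\beta_i,\ldots,\omega^{m-1}\beta_i)$ is block-diagonal with blocks $\omega^k\beta_i$, each a unit-modulus scalar times a unitary matrix and hence unitary, so $\beta(e_i)$ is unitary as well. This completes the reduction and establishes (\ref{thhermG}). There is no deep obstacle here: the conceptual core is simply that $\overline{\cdot}$ and $(\cdot)^{\rm H}$ are both conjugate-linear anti-involutions, which legitimizes the reduction to generators, and the only genuine labor is the bookkeeping of unitarity across the even and odd cases and the two forms of $Q$.
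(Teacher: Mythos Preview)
Your proof is correct and follows essentially the same approach as the paper: reduce to generators via conjugate-linearity and anti-multiplicativity, then verify that each $\beta(e_i)$ is unitary using that $P$, $Q$, $R$ are unitary together with the behavior of Kronecker products and block-diagonal matrices. Your framing through the two maps $\Phi$ and $\Psi$ is slightly more explicit about why the reduction to generators is legitimate, but the substance is the same.
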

\begin{proof}
We verify directly that the matrices $\beta(e_a)$, $a=1, \ldots, d$, are unitary:
\begin{eqnarray}(\beta(e_a))^{\rm H}\beta(e_a)=I_N,\qquad a=1, \ldots, d,\label{rty}
\end{eqnarray}
using the explicit formulas for the representation $\beta$ from Section \ref{secmatrG}. This follows from the fact that the matrices $P$, $Q$, and $R$ (\ref{PP}), (\ref{QQ}), (\ref{RR}) are unitary and the well-known properties of the Kronecker product $(A\otimes B)^\H=A^\H \otimes B^\H$, $(A\otimes B)(C\otimes D)=AC \otimes BD$.

The condition (\ref{rty}) is equivalent to \begin{eqnarray}(\beta(e_a))^{\rm H}=\beta(e_a^{-1}),\qquad a=1, \ldots, d.
\end{eqnarray}
Further, we have 
\begin{eqnarray}
&&\!\!\!\!\!\!\!\!\!\!\beta(\overline{U})=\beta (\!\!\!\!\!\sum_{j_1, \ldots, j_d=0}^{m-1}\!\!\!\!\! \bar{u}_{j_1 \ldots j_d} (e_1^{j_1}\cdots e_d^{j_d})^{-1})=\!\!\!\!\!\!\!\sum_{j_1, \ldots, j_d=0}^{m-1}\!\!\!\!\! \bar{u}_{j_1 \ldots j_d} \beta((e_d^{j_d})^{-1})\cdots \beta((e_1^{j_1})^{-1})\nonumber\\
&&\!\!\!\!\!\!\!\!\!\!=\!\!\!\!\!\!\!\sum_{j_1, \ldots, j_d=0}^{m-1}\!\!\!\!\! \bar{u}_{j_1 \ldots j_d} (\beta(e_d^{j_d}))^{\rm H}\cdots (\beta(e_1^{j_1}))^{\rm H}=(\beta(\!\!\!\!\!\sum_{j_1, \ldots, j_d=0}^{m-1}\!\!\!\!\! u_{j_1 \ldots j_d} e_1^{j_1}\cdots e_d^{j_d}))^{\rm H}=(\beta(U))^{\rm H}.\nonumber
\end{eqnarray}
This completes the proof.
\end{proof}
Note that property (\ref{thhermG}) is not valid for an arbitrary matrix representation $\gamma$ of the form (\ref{gammaG}). We have 
\begin{eqnarray}
(\gamma(U))^{\rm H}=\gamma(\overline{U}),\qquad \forall U\in\cl^{\frac{1}{m}}_d
\end{eqnarray}
only for matrix representations $\gamma$ of the form $\gamma=T^{-1}\beta T$, where $\beta$ is (\ref{betaG}) and $T\in\cl^{\frac{1}{m}}_d$ is unitary (i.e. $\overline{T}T=e$; see Section \ref{sec6}).

The operation (\ref{herm1G}) allows us to introduce the inner product
\begin{eqnarray}
U\cdot V:= \langle \overline{U} V \rangle_0=\sum_{j_1, j_2, \ldots j_d=0}^{m-1} \overline{u_{j_1 j_2 \ldots j_d}} v_{j_1 j_2 \ldots j_d},\qquad \forall U, V\in\cl^{\frac{1}{m}}_d,\label{innerG}
\end{eqnarray}
with the properties
\begin{eqnarray}
\!\!\!\!\!\!\!\!\!\!\!\!\!\!\!\!\!\!\!\!&&U \cdot (\lambda V+\mu W)=\lambda\, U\cdot V+\mu\, U\cdot W,\quad \forall U, V, W\in \cl^{\frac{1}{m}}_d,\quad \forall\lambda, \mu\in\BC;\\
\!\!\!\!\!\!\!\!\!\!\!\!\!\!\!\!\!\!\!\!&&U\cdot V= \overline{V \cdot U},\quad \forall U, V\in \cl^{\frac{1}{m}}_d;\\
\!\!\!\!\!\!\!\!\!\!\!\!\!\!\!\!\!\!\!\!&&U \cdot U \geq 0,\qquad  \forall U\in\cl^{\frac{1}{m}}_d;\qquad U\cdot U=0 \Leftrightarrow U=0.
\end{eqnarray}
The inner product (\ref{innerG}) allows us to introduce the norm
$$
||U||:=\sqrt{U\cdot U}=\sqrt{\langle \overline{U} U \rangle_0}=\sqrt{\sum_{j_1, j_2, \ldots, j_d=0}^{m-1} |u_{j_1 j_2 \ldots j_d}|^2},\qquad \forall U\in\cl^{\frac{1}{m}}_d.
$$
with the properties
\begin{eqnarray}
&&||U||\geq 0,\qquad \forall U\in\cl^{\frac{1}{m}}_d;\qquad ||U||=0 \Leftrightarrow U=0;\\
&&||\lambda U||=|\lambda| \, ||U||,\qquad \forall U\in\cl^{\frac{1}{m}}_d,\qquad \forall \lambda\in\BC;\\
&&||U+V||\leq ||U||+||V||,\qquad \forall U, V\in\cl^{\frac{1}{m}}_d.
\end{eqnarray}
These facts about the inner product and the norm are well-known. We present them here for the convenience of the reader.

\section{Unitary Lie groups and Lie algebras}\label{sec6}

Let us consider the following {\it unitary Lie group in $\cl^{\frac{1}{m}}_d$}
\begin{eqnarray}
{\rm U}\cl^\frac{1}{m}_d:=\{U\in\cl^\frac{1}{m}_d:\quad \overline{U} U=e\}.\label{UclG}
\end{eqnarray}

\begin{thm}\label{thUnitG} We have the following isomorphism of the Lie group (\ref{UclG}) and the classical matrix unitary Lie groups
\begin{eqnarray}
{\rm U}\cl^\frac{1}{m}_d\simeq
\begin{cases}
\rm{U}(m^{\frac{d}{2}}),& \mbox{if $d$ is even;}\\
\underbrace{\rm{U}(m^{\frac{d-1}{2}})\times \rm{U}(m^{\frac{d-1}{2}})\times \cdots \times \rm{U}(m^{\frac{d-1}{2}})}_{m\,\, \text{times}},& \mbox{if $d$ is odd}.
\end{cases}
\end{eqnarray}
\end{thm}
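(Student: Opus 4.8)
The plan is to transport the whole problem through the fixed faithful representation $\beta$ and lean on Theorem \ref{thHermG}, which intertwines the conjugation $\overline{\,\cdot\,}$ with the matrix Hermitian transpose ${\rm H}$. Since $\beta$ is an algebra isomorphism (it sends $e$ to $I_N$ and satisfies $\beta(UV)=\beta(U)\beta(V)$), applying $\beta$ to the defining relation $\overline{U}U=e$ and using $(\beta(U))^{\rm H}=\beta(\overline{U})$ gives
\[
\overline{U}U=e \iff \beta(\overline{U})\beta(U)=I_N \iff (\beta(U))^{\rm H}\beta(U)=I_N,
\]
so $U\in{\rm U}\cl^{\frac{1}{m}}_d$ exactly when $\beta(U)$ is a unitary matrix. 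First I would record that, because $\cl^{\frac{1}{m}}_d$ is finite dimensional, the one-sided relation $\overline{U}U=e$ already forces $\beta(U)$ to be genuinely (two-sided) unitary, since a left inverse of a square matrix is a two-sided inverse; thus $\overline{U}=U^{-1}$. Combined with the properties (\ref{pr1G})--(\ref{pr2G}), in particular $\overline{UV}=\overline{V}\,\overline{U}$ and $\overline{\overline{U}}=U$, this shows ${\rm U}\cl^{\frac{1}{m}}_d$ is closed under multiplication and inversion, hence a group, and that $\beta$ restricts to a group homomorphism onto the unitary elements of the image algebra.

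Next I would split into the two parity cases using (\ref{gammaG}). For even $d$ the image of $\beta$ (\ref{betaG}) is all of $\Mat(m^{d/2},\BC)$, whose unitary elements form exactly ${\rm U}(m^{d/2})$; the restriction of $\beta$ is then a bijective group homomorphism onto ${\rm U}(m^{d/2})$. For odd $d$ the image is the block-diagonal subalgebra $\bigoplus_{k}\Mat(m^{(d-1)/2},\BC)$ produced by (\ref{betaGG}), where each generator goes to ${\rm diag}(\beta_i,\omega\beta_i,\ldots,\omega^{m-1}\beta_i)$. The crucial observation is that ${\rm H}$ preserves this block structure: for $M={\rm diag}(M_1,\ldots,M_m)$ one has $M^{\rm H}M=I_N$ if and only if $M_k^{\rm H}M_k=I$ for each $k$. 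Hence the single condition $(\beta(U))^{\rm H}\beta(U)=I_N$ decouples into $m$ independent unitary conditions, one per block, yielding a bijection onto ${\rm U}(m^{(d-1)/2})^{\times m}$.

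I would finish by upgrading the set-theoretic bijection to a Lie group isomorphism: $\beta$ is a fixed linear isomorphism onto its image, so $\beta$ and $\beta^{-1}$ are smooth, and restricting to the unitary elements (a closed embedded subgroup on each side) gives a diffeomorphic group isomorphism. The only real content beyond this bookkeeping is the odd-$d$ step, where one must verify that Hermitian conjugation is compatible with the direct-sum decomposition so that unitarity decouples across the $m$ blocks; this is precisely what the explicit block-diagonal form (\ref{betaGG}) together with the behavior of ${\rm H}$ on block-diagonal matrices supplies. I expect this compatibility to be the main (though still routine) obstacle, since the even case reduces immediately to the classical fact that a square complex matrix $A$ with $A^{\rm H}A=I$ is unitary.
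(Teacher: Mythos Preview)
Your proposal is correct and follows exactly the same route as the paper: invoke Theorem~\ref{thHermG} to identify $\overline{U}U=e$ with $(\beta(U))^{\rm H}\beta(U)=I_N$, then read off the unitary group of the target matrix algebra from the isomorphism~(\ref{gammaG}). The paper's own proof is a single sentence citing these two ingredients, whereas you spell out the block-diagonal decoupling in the odd case and the smoothness of the restriction; this is helpful detail but not a different argument.
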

\begin{proof} Theorem follows from Theorem \ref{thHermG} and the isomorphism (\ref{gammaG}).
\end{proof}
\begin{cor}\label{cor1}
The Lie algebra
\begin{eqnarray}
\mathfrak{u}\cl^\frac{1}{m}_d:=\{U\in\cl^\frac{1}{m}_d:\quad \overline{U} =-U\}
\end{eqnarray}
corresponds to the Lie group (\ref{UclG}) and is isomorphic to the classical matrix unitary Lie algebra
\begin{eqnarray}
\mathfrak{u}\cl^\frac{1}{m}_d\simeq\begin{cases}
\mathfrak{u}(m^{\frac{d}{2}}),& \mbox{if $d$ is even;}\\
\underbrace{\mathfrak{u}(m^{\frac{d-1}{2}})\oplus \mathfrak{u}(m^{\frac{d-1}{2}})\oplus \cdots \oplus \mathfrak{u}(m^{\frac{d-1}{2}})}_{m\,\, \text{times}},& \mbox{if $d$ is odd}.
\end{cases}
\end{eqnarray}
\end{cor}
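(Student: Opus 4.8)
The plan is to prove two things: first, that $\mathfrak{u}\cl^\frac{1}{m}_d$ really is the Lie algebra (the tangent space at the identity) of the group $\mathrm{U}\cl^\frac{1}{m}_d$ from (\ref{UclG}); and second, that the faithful representation $\beta$ carries it isomorphically onto the classical skew-Hermitian matrix algebra. The whole argument rests on Theorem \ref{thHermG}, so most of the work is already done.

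For the first part I would use the standard curve argument. Take a smooth curve $U(t)\in\mathrm{U}\cl^\frac{1}{m}_d$ with $U(0)=e$, so that $\overline{U(t)}\,U(t)=e$ identically. Differentiating at $t=0$ and writing $X:=\dot U(0)$, the properties (\ref{pr1G})--(\ref{pr2G}) together with $\overline{e}=e$ give $\overline{X}\,e+e\,X=\overline{X}+X=0$, i.e. $\overline{X}=-X$; hence the tangent space is contained in $\mathfrak{u}\cl^\frac{1}{m}_d$. Conversely, for any $X$ with $\overline{X}=-X$ I would consider the one-parameter family $\exp(tX)=\sum_{n\ge 0}(tX)^n/n!$. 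Because $\overline{\,\cdot\,}$ is conjugate-linear and reverses products (\ref{pr1G})--(\ref{pr2G}) and $t$ is real, induction gives $\overline{X^n}=\overline{X}^{\,n}$ and hence $\overline{\exp(tX)}=\exp(t\overline{X})=\exp(-tX)$. Since $tX$ and $-tX$ commute, $\overline{\exp(tX)}\exp(tX)=\exp(-tX)\exp(tX)=e$, so $\exp(tX)$ is a curve in $\mathrm{U}\cl^\frac{1}{m}_d$ through $e$ with velocity $X$. Therefore the tangent space equals $\mathfrak{u}\cl^\frac{1}{m}_d$.

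For the isomorphism, the key input is Theorem \ref{thHermG}: applying $\beta$ to the defining relation and using $\beta(\overline{X})=(\beta(X))^{\rm H}$ turns $\overline{X}=-X$ into $(\beta(X))^{\rm H}=-\beta(X)$, i.e. $\beta(X)$ is skew-Hermitian. Since $\beta$ is an algebra isomorphism, it is a linear bijection preserving commutators, $\beta([X,Y])=[\beta(X),\beta(Y)]$, hence a Lie algebra isomorphism onto its image. That image is precisely the set of skew-Hermitian elements of the target of (\ref{gammaG}): for even $d$ this is the skew-Hermitian matrices in $\Mat(m^{d/2},\BC)$, namely $\mathfrak{u}(m^{d/2})$, and for odd $d$ an element of the $m$-fold direct sum is skew-Hermitian if and only if each summand is, giving $\mathfrak{u}(m^{(d-1)/2})^{\oplus m}$. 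This yields the stated isomorphism. Equivalently, one may simply differentiate the group isomorphism of Theorem \ref{thUnitG} at the identity and use that the Lie algebra of $\mathrm{U}(N)$ is $\mathfrak{u}(N)$.

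I do not expect a genuine obstacle here; the argument is routine once Theorem \ref{thHermG} is available. The only points requiring care are the verification that $\overline{\,\cdot\,}$ commutes with the exponential series (which rests on conjugate-linearity together with the anti-automorphism property and the reality of $t$) and, in the odd-$d$ case, the componentwise nature of the skew-Hermitian condition across the direct summands.
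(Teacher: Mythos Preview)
Your proof is correct. In the paper this statement is given as a bare corollary with no separate proof, the implicit argument being exactly your final sentence: differentiate the group isomorphism of Theorem \ref{thUnitG} at the identity and use that the Lie algebra of $\mathrm{U}(N)$ is $\mathfrak{u}(N)$. Your curve/exponential argument and the direct transport via Theorem \ref{thHermG} simply spell out what the paper leaves to the reader.
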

The considered Lie groups and Lie algebras have the following (real) dimensions:
\begin{eqnarray}
\dim({\rm U}\cl^\frac{1}{m}_d)=\dim(\mathfrak{u}\cl^\frac{1}{m}_d)=m^d.
\end{eqnarray}

Note that the explicit basis of $\mathfrak{u}\cl^\frac{1}{m}_d$ is easily written out for fixed $m$ and $d$ using the relation $\overline{U} =-U$ with substitution of (\ref{UUG}) and (\ref{z3}) into it. For example, the basis contains the elements 
\begin{eqnarray*}
ie,\quad e_1-e_1^{m-1},\quad i(e_1+e_1^{m-1}),\quad  e_2-e_2^{m-1},\quad i(e_2+e_2^{m-1}),\quad \ldots
\end{eqnarray*}
and others ($m^d$ elements in total). In the case $m=3$ and $d=2$, the basis is written out in the example below.

\begin{ex}
In the case of $\cl^{\frac{1}{3}}_2$, we have the Lie group
\begin{eqnarray}
{\rm U}\cl^\frac{1}{3}_2:=\{U\in\cl^\frac{1}{3}_2:\quad \overline{U} U=e\},\label{Ucl}
\end{eqnarray}
which is isomorphic to the classical matrix unitary Lie group
\begin{eqnarray}
{\rm U}\cl^\frac{1}{3}_2\simeq{\rm U}(3):=\{A\in\Mat(3,{\mathbb C}):\quad A^{\rm H} A=I\}.
\end{eqnarray}
The Lie algebra
\begin{eqnarray}
\mathfrak{u}\cl^\frac{1}{3}_2:=\{U\in\cl^\frac{1}{3}_2:\quad \overline{U} =-U\}
\end{eqnarray}
corresponds to the Lie group (\ref{Ucl}) and is isomorphic to the classical matrix unitary Lie algebra
\begin{eqnarray}
\mathfrak{u}\cl^\frac{1}{3}_2\simeq\mathfrak{u}(3):=\{A\in\Mat(3,{\mathbb C}):\quad A^{\rm H}=-A\}.
\end{eqnarray}
The considered Lie group and Lie algebra have the following (real) dimensions:
\begin{eqnarray}
\dim({\rm U}\cl^\frac{1}{3}_2)=\dim(\mathfrak{u}\cl^\frac{1}{3}_2)=9.
\end{eqnarray}
A basis of $\mathfrak{u}\cl^\frac{1}{3}_2$ is 
\begin{eqnarray}
&&\tau_0=ie,\qquad \tau_1=e_1-e_1^2,\qquad
\tau_2=i(e_1+e_1^2),\label{newbasis0}\\
&&\tau_3=e_2-e_2^2,\qquad \tau_4=i(e_2+e_2^2),\nonumber\\
&&\tau_5=e_1e_2-\omega^2e_1^2e_2^2,\qquad \tau_6=i(e_1e_2+\omega^2 e_1^2 e_2^2),\nonumber\\
&&\tau_7=e_1^2e_2-\omega e_1 e_2^2,\qquad \tau_8=i(e_1^2 e_2+\omega e_1 e_2^2). \nonumber
\end{eqnarray}
\end{ex}

\section{Trace, determinant, characteristic polynomial, and inverse}\label{sec7}

We have the following lemma. 

\begin{lem}\label{lem1} For an arbitrary $\gamma$ (\ref{gammaG}), we have
\begin{eqnarray}
\tr(\gamma(U))=m^{[\frac{d+1}{2}]} \langle U \rangle_0,\qquad \forall
U\in \cl^{\frac{1}{m}}_d.\label{trG}
\end{eqnarray}
\end{lem}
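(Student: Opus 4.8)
The plan is to use that both sides of (\ref{trG}) are $\BC$-linear in $U$: the trace is linear and $\langle\cdot\rangle_0$ is linear by (\ref{linearG}). Hence it suffices to verify the identity on the basis elements $g=e_1^{j_1}\cdots e_d^{j_d}$. Since $\langle g\rangle_0=1$ when $g=e$ and $\langle g\rangle_0=0$ otherwise, and $m^{[\frac{d+1}{2}]}=N$, the claim splits into two statements: $\tr(\gamma(e))=N$, which is immediate from $\gamma(e)=I_N$, and $\tr(\gamma(g))=0$ for every non-identity basis element $g$. Moreover, any two representations of the form (\ref{gammaG}) differ by an algebra isomorphism (conjugation within the blocks, together with a permutation of blocks in the odd-$d$ case), so $\tr(\gamma(U))$ is independent of the chosen $\gamma$; thus it is enough to work with $\gamma=\beta$.

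First I would dispose of the \emph{non-central} basis elements by a phase argument. For a basis element $g=e_1^{j_1}\cdots e_d^{j_d}$ and a generator $e_b$, the commutation relations $e_ke_l=\omega e_le_k$ (for $k<l$) give $e_b\,g\,e_b^{-1}=\omega^{t_b}g$ with $t_b=\sum_{i>b}j_i-\sum_{i<b}j_i$. Applying the homomorphism $\beta$ and using invariance of the trace under conjugation yields $\tr(\beta(g))=\tr(\beta(e_b)\beta(g)\beta(e_b)^{-1})=\omega^{t_b}\tr(\beta(g))$, so $\tr(\beta(g))=0$ whenever $t_b\not\equiv 0\pmod m$ for some $b$, i.e.\ whenever $g$ is not central. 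When $d$ is even the representation lands in the simple algebra $\Mat(N,\BC)$, whose center is $\BC e$, so $e$ is the only central basis element and this step already finishes the even case.

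The remaining and main difficulty is the central basis elements $g\neq e$, which occur only for odd $d$ and against which the conjugation argument is powerless. Here I would exploit the explicit block structure (\ref{betaGG}): for $g=e_1^{j_1}\cdots e_d^{j_d}$ the matrix $\beta(g)$ is block-diagonal with $m$ blocks, the $k$-th block being $\omega^{kJ}B$, where $J=j_1+\cdots+j_d$ is the grade of $g$ and $B$ is the corresponding product of the even-case matrices $\beta_i$. Consequently $\tr(\beta(g))=\tr(B)\sum_{k=0}^{m-1}\omega^{kJ}$, which vanishes as soon as $J\not\equiv 0\pmod m$. To see that this holds for every central $g\neq e$, note that centrality forces $t_b\equiv 0\pmod m$ for all $b$; taking successive differences $t_b-t_{b+1}=j_b+j_{b+1}$ gives $j_i\equiv(-1)^{i-1}j_1\pmod m$, whence for odd $d$ one obtains $J\equiv j_1\sum_{i=1}^{d}(-1)^{i-1}=j_1\pmod m$. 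Since $g\neq e$ forces $j_1\neq 0$, we get $J\not\equiv 0\pmod m$ and the geometric sum is zero. The crux of the whole argument is precisely this last step: the cancellation for central elements cannot come from conjugation but must be extracted from the grading via the identity $J\equiv j_1\pmod m$ for odd $d$ together with $j_1\neq 0$.
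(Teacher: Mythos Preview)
Your proof is correct and, in fact, supplies the computation that the paper's proof omits. Both arguments share the same skeleton: verify the identity for the specific representation $\beta$ and then argue that $\tr(\gamma(U))$ is independent of the choice of $\gamma$. The paper dispatches the first step with ``it is easily verified'' and spends its effort on the second, invoking Skolem--Noether for even $d$ and, for odd $d$, the grade automorphism (\ref{gr}) together with the relation $\gamma(U)=T^{-1}\beta(\widehat{U}^{(j)})T$, reducing back to the $\beta$ case via $\langle\widehat{U}^{(j)}\rangle_0=\langle U\rangle_0$. You invert the emphasis: your explicit phase/conjugation argument for non-central basis elements and the block-diagonal geometric-sum argument for the central ones (with the neat observation $J\equiv j_1\pmod m$ for odd $d$) fill in exactly what the paper skips, while your treatment of representation-independence (any automorphism of a semisimple matrix algebra permutes the isomorphic simple summands and conjugates within each, hence preserves trace) is structural and avoids the grade automorphism altogether. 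Both routes are valid; yours is more self-contained and elementary, the paper's ties the statement to an operation it reuses elsewhere.
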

\begin{proof} For the fixed matrix representation $\beta$ (\ref{betaG}), (\ref{betaGG}), it is easily verified. 

In the case of even $d$, the algebra $\cl^\frac{1}{m}_d$ is a central simple algebra. By the Skolem–Noether theorem, every automorphism of a central simple algebra is inner, so any other faithful  matrix representation of $\cl^\frac{1}{m}_d$ of minimal dimension $\gamma$ has the form $\gamma(U)=T^{-1} \beta(U) T$. We get $\tr(\gamma(U))=\tr(\beta(U))$ and the lemma is proved in this case. 

In the case of odd $d$, we can have one of the relations (see Section \ref{secmatrG}):
\begin{eqnarray}
\gamma(e_a)=T^{-1}\omega^j \beta(e_a) T,\qquad j=0, 1, \ldots, m-1.\label{ty1}
\end{eqnarray}
Using the grade automorphism (see the details in \cite{ShirT})
\begin{eqnarray}
\widehat{U}:=\bigoplus_{k=0}^{m(d-1)}  \omega^k \langle U \rangle_{k},\label{gr}
\end{eqnarray}
this can be rewritten in the form  \begin{eqnarray}
\gamma(U)=T^{-1}\beta(\widehat{U}^{(j)}) T,\qquad j=0, 1, \ldots, m-1,\label{ty2}
\end{eqnarray}
where $\widehat{U}^{(j)}$ means that the grade automorphism (\ref{gr}) is taken $j$ times. We get \begin{eqnarray}
\tr(\gamma(U))=\tr(\beta(\widehat{U}^{(j)}))=N\langle \widehat{U}^{(j)}\rangle=N \langle U \rangle_0=\tr(\beta(U))
\end{eqnarray}
and the lemma is proved in this case. 
\end{proof}
\begin{cor}\label{cor2} We have
\begin{eqnarray}
\langle UV \rangle_0=\langle VU \rangle_0,\qquad \forall U, V\in\cl^\frac{1}{m}_d.\label{UVVUG}
\end{eqnarray}
\end{cor}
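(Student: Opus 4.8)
The plan is to combine Lemma \ref{lem1} with the cyclic invariance of the matrix trace. First I would recall that $\gamma$ is a faithful representation of $\cl^{\frac{1}{m}}_d$, hence an algebra homomorphism, so that $\gamma(UV)=\gamma(U)\gamma(V)$ and $\gamma(VU)=\gamma(V)\gamma(U)$ for all $U,V\in\cl^{\frac{1}{m}}_d$.

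Next I would invoke the fact that the trace of a product of square matrices is cyclic, $\tr(AB)=\tr(BA)$. In the even-$d$ case $\gamma$ maps into $\Mat(N,\BC)$ with $N:=m^{[\frac{d+1}{2}]}$, so this is immediate. In the odd-$d$ case $\gamma$ maps into a direct sum of $m$ copies of $\Mat(m^{\frac{d-1}{2}},\BC)$; here the trace is the sum of the block traces, and since cyclicity holds in each block it holds for the total trace as well. Thus in both cases $\tr(\gamma(U)\gamma(V))=\tr(\gamma(V)\gamma(U))$.

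Finally I would apply Lemma \ref{lem1} to the elements $UV$ and $VU$. Writing $N:=m^{[\frac{d+1}{2}]}\neq 0$, the chain
$$N\langle UV\rangle_0=\tr(\gamma(UV))=\tr(\gamma(U)\gamma(V))=\tr(\gamma(V)\gamma(U))=\tr(\gamma(VU))=N\langle VU\rangle_0$$
holds, and dividing by $N$ yields $\langle UV\rangle_0=\langle VU\rangle_0$. There is essentially no obstacle here, since the statement is a direct consequence of the preceding lemma; the only point requiring a moment's care is the odd-$d$ case, where one must observe that trace cyclicity survives passage to the direct-sum representation, but this is immediate from the block decomposition of the trace.
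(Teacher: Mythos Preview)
Your argument is correct and matches the paper's own proof, which simply cites Lemma~\ref{lem1} together with the cyclicity of the matrix trace $\tr(AB)=\tr(BA)$. Your extra remark about the odd-$d$ block-diagonal case is a harmless elaboration of the same idea.
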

\begin{proof}
It follows from (\ref{trG}) and the well-known property of the trace of matrices $\tr(AB)=\tr(BA)$, $\forall A, B\in\Mat(N, \BC)$.
\end{proof}

Let us introduce the notion of {\it determinant  in $\cl^\frac{1}{m}_d$}:
\begin{eqnarray}
\Det(U):=\det(\gamma(U))\in\mathbb{C},\qquad U\in\cl^\frac{1}{m}_d\label{detG}
\end{eqnarray}
for an arbitrary faithful matrix representation $\gamma$ (\ref{gammaG}) of $\cl^\frac{1}{m}_d$ of minimal dimension.

\begin{lem}\label{lemG} The determinant (\ref{detG}) is well-defined, i.e. it does not depend on the matrix representation $\gamma$ (\ref{gammaG}).
\end{lem}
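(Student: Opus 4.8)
The plan is to reuse the two-case analysis from the proof of Lemma~\ref{lem1}, replacing the trace by the determinant throughout and invoking the two elementary facts $\det(T^{-1}MT)=\det(M)$ and $\det({\rm diag}(M_0,\ldots,M_{m-1}))=\prod_{j}\det(M_j)$, the latter being invariant under any permutation of the blocks $M_j$.

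First I would treat the case of even $d$. Here $\cl^{\frac{1}{m}}_d$ is central simple and isomorphic to $\Mat(m^{\frac{d}{2}},\BC)$, so every faithful matrix representation of minimal dimension has size $N=m^{\frac{d}{2}}$, and by the Skolem--Noether theorem any such $\gamma$ has the form $\gamma(U)=T^{-1}\beta(U)T$ for some invertible $T$ and the fixed representation $\beta$ of (\ref{betaG}). Consequently $\det(\gamma(U))=\det(T^{-1}\beta(U)T)=\det(\beta(U))$, so $\Det(U)$ is the same for all $\gamma$ in this case.

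Next I would treat the case of odd $d$, which I expect to be the main obstacle. As recorded in the proof of Lemma~\ref{lem1}, every minimal faithful $\gamma$ here satisfies, up to conjugation by some invertible $T$, the relation $\gamma(U)=T^{-1}\beta(\widehat U^{(j)})T$ for a single $j\in\{0,1,\ldots,m-1\}$, where $\widehat{(\cdot)}^{(j)}$ is the $j$-fold grade automorphism (\ref{gr}). It therefore suffices to prove the single identity $\det(\beta(\widehat U))=\det(\beta(U))$ and to iterate it $j$ times. To this end, recall from (\ref{betaGG}) that $\beta(U)$ is block-diagonal with $m$ blocks $\rho_0(U),\ldots,\rho_{m-1}(U)$, where the $j$-th irreducible block acts on generators by $\rho_j(e_i)=\omega^j\beta_i$. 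Since the grade automorphism multiplies each generator $e_i$ by $\omega$, one checks that $\rho_j(\widehat U)=\rho_{j+1}(U)$ with indices read modulo $m$; hence $\beta(\widehat U)$ is obtained from $\beta(U)$ by a cyclic permutation of its $m$ diagonal blocks. Because the determinant of a block-diagonal matrix is the product of the determinants of its blocks, this permutation leaves the determinant unchanged, giving $\det(\beta(\widehat U))=\prod_{j}\det(\rho_{j+1}(U))=\prod_{j}\det(\rho_j(U))=\det(\beta(U))$. Combined with invariance under conjugation, this yields $\det(\gamma(U))=\det(\beta(U))$ and closes the odd case.

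The delicate point, as anticipated, lies entirely in the odd case: one must correctly identify how an arbitrary minimal faithful representation relates to $\beta$ and verify that the grade automorphism acts on the irreducible components precisely by a cyclic permutation. Once the explicit block structure of (\ref{betaGG}) is used to establish $\rho_j\circ\widehat{(\cdot)}=\rho_{j+1}$, the determinant --- being a product over blocks and thus permutation-invariant --- behaves exactly as the trace did in Lemma~\ref{lem1}, and the independence of $\Det(U)$ from $\gamma$ follows.
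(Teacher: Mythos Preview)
Your proposal is correct and follows the same two-case strategy as the paper: Skolem--Noether for even $d$, and for odd $d$ the reduction to $\det(\beta(\widehat U))=\det(\beta(U))$ via the block-diagonal form (\ref{betaGG}). The only difference is in how the permutation of blocks is verified. The paper decomposes $U$ into its grade-mod-$m$ components $U_{(i)}$, writes the $k$-th diagonal block of $\beta(U)$ and of $\beta(\widehat U^{(j)})$ explicitly as $\sum_i\omega^{ki}A_i$ and $\sum_i\omega^{(k+j)i}A_i$ respectively, and observes that these differ by a cyclic shift in $k$. Your route is more direct: since the grade automorphism sends each generator $e_i$ to $\omega e_i$ and $\rho_j(e_i)=\omega^j\beta_i$, one gets $\rho_j\circ\widehat{(\cdot)}=\rho_{j+1}$ on generators and hence everywhere, so $\beta(\widehat U)$ is $\beta(U)$ with its blocks cyclically permuted. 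Both establish the same cyclic-permutation fact; yours avoids the explicit $A_i$ bookkeeping, while the paper's computation makes the block contents visible at the matrix level.
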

\begin{proof} 
In the case of even $d$, the algebra $\cl^\frac{1}{m}_d$ is a central simple algebra. By the Skolem–Noether theorem, any faithful matrix representation of $\cl^\frac{1}{m}_d$ of minimal dimension $\gamma$ has the form $\gamma(U)=T^{-1} \beta(U) T$. We get $\det(\gamma(U))=\det(\beta(U))$ and the lemma is proved in this case. 

In the case of odd $d$, we can have one of the relations (\ref{ty1}), which can be rewritten in the form (\ref{ty2}). 
We get \begin{eqnarray}
\det(\gamma(U))=\det(\beta(\widehat{U}^{(j)})).
\end{eqnarray}
Now let us prove that
\begin{eqnarray}
\det(\beta(\widehat{U}^{(j)}))=\det(\beta(U)).
\end{eqnarray}
Let us introduce the subspaces (see \cite{ShirT})
\begin{eqnarray}
\cl^{\frac{1}{m}, (k)}_d:=\bigoplus_{j=k \mod m}\cl_d^{\frac{1}{m}, j}.
\end{eqnarray}
We have
\begin{eqnarray}
\cl_d^{\frac{1}{m}}=\bigoplus_{j=0}^{m-1}\cl^{\frac{1}{m}, (j)}_d,\quad U=\bigoplus_{j=0}^{m-1} U_{(j)},\quad U\in\cl_d^{\frac{1}{m}},\quad U_{(j)}\in\cl^{\frac{1}{m}, (j)}_d.
\end{eqnarray}
For the representation $\beta$ (\ref{betaGG}), we get 
\begin{eqnarray*}
&&\beta(U_{(i)})={\rm diag}(A_i, \omega^i A_i, \omega^{2i} A_i, \ldots, \omega^{(m-1)i}A_i),\quad i=0, 1, \ldots, m-1,\\
&&\beta(\widehat{U_{(i)}}^{(j)})={\rm diag}(\omega^{ij} A_i, \omega^{i+ij} A_i, \omega^{2i+ij} A_i, \ldots, \omega^{(m-1)i+ij}A_i),
\end{eqnarray*}
for some matrices $A_i$, $i=0, 1, \ldots, m-1$ of size $m^{\frac{d-1}{2}}$; $j=0, 1, \ldots, m-1$. We get
\begin{eqnarray*}
&&\!\!\!\!\!\!\!\!\!\!\beta(U)={\rm diag}(\sum_{i=0}^{m-1} A_i, \sum_{i=0}^{m-1}\omega^i A_i, \sum_{i=0}^{m-1}\omega^{2i} A_i, \ldots, \sum_{i=0}^{m-1}\omega^{(m-1)i}A_i),\\
&&\!\!\!\!\!\!\!\!\!\!\beta(\widehat{U}^{(j)})={\rm diag}(\sum_{i=0}^{m-1}\omega^{ij} A_i, \sum_{i=0}^{m-1}\omega^{i+ij} A_i, \sum_{i=0}^{m-1}\omega^{2i+ij} A_i, \ldots, \sum_{i=0}^{m-1}\omega^{(m-1)i+ij}A_i).
\end{eqnarray*}
Finally, we obtain
\begin{eqnarray*}
\det(\beta(\widehat{U}^{(j)}))=\prod_{k=0}^{m-1}\det(\sum_{i=0}^{m-1}\omega^{(k+j)i} A_i)=\prod_{k=0}^{m-1}\det(\sum_{i=0}^{m-1}\omega^{ki} A_i)=\det(\beta(U)),
\end{eqnarray*}
and the lemma is proved in this case.
\end{proof}
Note that in the particular case of ordinary ($m=2$) Clifford algebras $\cl_{d}^{\frac{1}{2}}$, we have $\omega=-1$ and the proof of the case of odd $d$ in Lemma \ref{lemG} has a simpler form and is presented in \cite{Det} (we simply have two matrices $\beta(U)={\rm diag}(A_0+A_1, A_0-A_1)$ and $\beta(\widehat{U})={\rm diag}(A_0+A_1, A_0-A_1)$ with the same determinant). In the case of ternary ($m=3$) Clifford algebra $\cl_{d}^{\frac{1}{3}}$ with odd $d$, we have three matrices 
\begin{eqnarray*}
&&\beta(U)={\rm diag}(A_0+A_1+A_2,\,\,\, A_0+\omega A_1+\omega^2 A_2,\,\,\, A_0+\omega^2 A_1+\omega A_2),\\
&&\beta(\widehat{U})={\rm diag}(A_0+\omega A_1+\omega^2 A_2,\,\,\, A_0+\omega^2 A_1+\omega A_2,\,\,\, A_0+A_1+A_2),\\ &&\beta(\widehat{\widehat{U}})={\rm diag}(A_0+\omega^2 A_1+\omega A_2,\,\,\, A_0+A_1+A_2,\,\,\, A_0+\omega A_1+\omega^2 A_2)
\end{eqnarray*}
with the same determinant.

\begin{cor}\label{corGG} We have
\begin{eqnarray}
\Det(U)=\overline{\Det(\overline{U})},\qquad \forall U\in\cl^{\frac{1}{m}}_d.
\end{eqnarray}
\end{cor}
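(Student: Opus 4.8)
The plan is to reduce the statement to the distinguished representation $\beta$ and then invoke the elementary matrix identity $\det(A^{\rm H})=\overline{\det(A)}$. Since the determinant $\Det$ is well-defined by Lemma \ref{lemG}, I am free to compute it using any faithful matrix representation of minimal dimension, and I would deliberately choose $\beta$ from (\ref{betaG}), (\ref{betaGG}) because it is precisely the representation for which the clean Hermitian conjugation property of Theorem \ref{thHermG} holds, namely $\beta(\overline{U})=(\beta(U))^{\rm H}$.

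First I would record the standard fact that for any $A\in\Mat(N,\BC)$ one has $\det(A^{\rm H})=\det(\overline{A}^{\,T})=\overline{\det(A^T)}=\overline{\det(A)}$, using $\det(A^T)=\det(A)$ together with the conjugate-linearity of the determinant under entrywise complex conjugation. Combining this with Theorem \ref{thHermG} applied to $\beta$, I would take determinants of $\beta(\overline{U})=(\beta(U))^{\rm H}$ to obtain
\begin{eqnarray*}
\Det(\overline{U})=\det(\beta(\overline{U}))=\det((\beta(U))^{\rm H})=\overline{\det(\beta(U))}=\overline{\Det(U)}.
\end{eqnarray*}
Conjugating both sides then yields $\overline{\Det(\overline{U})}=\Det(U)$, which is exactly the claim.

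I do not anticipate any genuine obstacle: the corollary is an immediate consequence of Theorem \ref{thHermG} together with a textbook property of determinants. The single point requiring care is that $\Det$ is defined through an arbitrary minimal representation $\gamma$, and the note following Theorem \ref{thHermG} shows that $\gamma(\overline{U})=(\gamma(U))^{\rm H}$ may fail for such a general $\gamma$ (it holds only when $\gamma=T^{-1}\beta T$ with $T$ unitary). This causes no difficulty precisely because Lemma \ref{lemG} guarantees that $\Det$ does not depend on the choice of representation, so restricting the computation to the distinguished $\beta$ is fully legitimate.
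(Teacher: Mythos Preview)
Your proof is correct and follows essentially the same route as the paper: both compute in the distinguished representation $\beta$, apply Theorem~\ref{thHermG} to get $\beta(\overline{U})=(\beta(U))^{\rm H}$, and then invoke the elementary identity $\det(A^{\rm H})=\overline{\det(A)}$. Your write-up is somewhat more explicit in justifying the passage to $\beta$ via Lemma~\ref{lemG}, but the argument is the same.
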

\begin{proof}
We have
$$
\overline{\Det(\overline{U})}=\overline{\det(\beta(\overline{U}))}=\overline{\det(\beta(U))^{\rm H}}=\det(\beta(U))=\Det(U).$$
\end{proof}

Let us introduce {\it a characteristic polynomial of $U\in\cl^\frac{1}{m}_d$}:
\begin{eqnarray}
\!\!\varphi_U(\lambda):=\Det(\lambda e - U)=\lambda^N-C_{(1)}\lambda^{N-1}-\cdots-C_{(N)}\in{\mathbb C},\qquad \lambda\in{\mathbb C},
\end{eqnarray}
where $C_{(k)}=C_{(k)}(U)\in\cl^{\frac{1}{m}, 0}_d\equiv\BC$, $k=1, 2, 3, \ldots, N$, are {\it characteristic polynomial coefficients}.

\begin{thm}\label{FLG}
For the characteristic polynomial coefficients $C_{(k)}=C_{(k)}(U)\in\cl^{\frac{1}{m}, 0}_d\equiv\BC$, $k=1, 2, 3, \ldots, N$, we have the following basis-free formulas:
\begin{eqnarray}
U_{(1)}:=U,\qquad U_{(k+1)}=U(U_{(k)}-C_{(k)}),\qquad C_{(k)}:=\frac{N}{k}\langle U_{(k)}\rangle_0,\label{KG}\\
\Det(U)=-U_{(N)}=-C_{(N)}=U(C_{(N-1)}-U_{(N-1)})\in\BC.\label{LFG}
\end{eqnarray}
\end{thm}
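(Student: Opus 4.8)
The plan is to transport the entire recursion through the faithful matrix representation $\beta$ (equivalently, through any minimal faithful $\gamma$, which is legitimate by Lemma \ref{lemG}) and to recognise it as the classical Faddeev--LeVerrier algorithm for the single matrix $A:=\gamma(U)\in\Mat(N,\BC)$. Since $\gamma$ is an algebra homomorphism with $\gamma(e)=I_N$, applying $\gamma$ to the defining relations $U_{(1)}=U$ and $U_{(k+1)}=U(U_{(k)}-C_{(k)})$ turns them into the matrix recursion $A_{(1)}=A$, $A_{(k+1)}=A(A_{(k)}-C_{(k)}I_N)$, where $A_{(k)}:=\gamma(U_{(k)})$. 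Moreover, Lemma \ref{lem1} identifies the scalar parts: $\langle U_{(k)}\rangle_0=\frac1N\tr(A_{(k)})$, so that $C_{(k)}=\frac{N}{k}\langle U_{(k)}\rangle_0=\frac1k\tr(A_{(k)})$. Thus the whole statement reduces to a fact about one complex matrix, and the faithfulness of $\gamma$ will let me pull the matrix conclusions back into $\cl^{\frac1m}_d$.

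First I would show, by induction on $k$, that each $A_{(k)}$ is a polynomial in $A$ with zero constant term, explicitly $A_{(k)}=A^{k}-C_{(1)}A^{k-1}-\cdots-C_{(k-1)}A$; the inductive step is immediate from $A_{(k+1)}=A(A_{(k)}-C_{(k)}I_N)$. In particular every $A_{(k)}$ commutes with $A$, so the left/right placement of the scalar factor $U$ in the recursion is immaterial. Taking traces and writing $p_r:=\tr(A^{r})$ for the power sums, the relation $k\,C_{(k)}=\tr(A_{(k)})=p_k-C_{(1)}p_{k-1}-\cdots-C_{(k-1)}p_1$ emerges, and this is precisely Newton's identity written so as to express $C_{(k)}$ through the previously computed coefficients.

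Next I would invoke Newton's identities to conclude that the scalars produced by this recursion are exactly the coefficients of the characteristic polynomial in the stated normalisation, i.e. $\det(\lambda I_N-A)=\lambda^{N}-C_{(1)}\lambda^{N-1}-\cdots-C_{(N)}$; this is the content of the Faddeev--LeVerrier theorem and follows by matching the trace recursion above against Newton's identities term by term. Since $\det(\lambda I_N-A)=\Det(\lambda e-U)=\varphi_U(\lambda)$ by the definition of $\Det$, this establishes the formula $C_{(k)}=\frac{N}{k}\langle U_{(k)}\rangle_0$ for all the characteristic coefficients, which is (\ref{KG}).

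Finally, for (\ref{LFG}) I would apply the Cayley--Hamilton theorem to $A$: it gives $A^{N}-C_{(1)}A^{N-1}-\cdots-C_{(N-1)}A-C_{(N)}I_N=0$, whence $A_{(N)}=C_{(N)}I_N$ is a scalar matrix and $A_{(N+1)}=A(A_{(N)}-C_{(N)}I_N)=0$. Because $\gamma$ is faithful, $A_{(N)}=\gamma(C_{(N)}e)$ forces $U_{(N)}=C_{(N)}$ (so $U_{(N)}$ is genuinely scalar), and $A_{(N+1)}=0$ forces $U_{(N+1)}=0$; rewriting $U_{(N)}=U(U_{(N-1)}-C_{(N-1)})$ then yields $-U_{(N)}=U(C_{(N-1)}-U_{(N-1)})$, and comparing $C_{(N)}$ with the constant term of $\varphi_U(\lambda)=\Det(\lambda e-U)$ identifies $\Det(U)$ in terms of $C_{(N)}$, giving (\ref{LFG}). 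The step I expect to be the real work is the middle one, namely verifying that the trace recursion coincides with Newton's identities (that is, proving Faddeev--LeVerrier correctness rather than merely citing it), together with the clean use of faithfulness to transport the scalar identity $A_{(N)}=C_{(N)}I_N$ back to $U_{(N)}=C_{(N)}$ inside the algebra; once the problem has been pushed into $\Mat(N,\BC)$, everything else is bookkeeping.
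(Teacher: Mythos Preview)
Your approach is correct and matches the paper's own proof, which consists of a single sentence citing the Faddeev--LeVerrier algorithm together with the trace relation (\ref{trG}) from Lemma~\ref{lem1}. You have simply unpacked that citation: transporting the recursion through the faithful $\gamma$, recognising $C_{(k)}=\tfrac{1}{k}\tr(\gamma(U_{(k)}))$ via (\ref{trG}), and then invoking Newton's identities and Cayley--Hamilton for $\gamma(U)$ is exactly what the reference to Faddeev--LeVerrier encodes.
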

\begin{proof} The formulas (\ref{KG}), (\ref{LFG}) are a  modification of the Faddeev--LeVerrier algorithm \cite{FL} for the case of generalized Clifford algebras, where we use the relation  (\ref{trG}).
\end{proof}

\begin{cor}\label{corG}
If $\Det(U)\neq 0$ for some $U\in\cl^{\frac{1}{m}}_d$, then $U$ is invertible and we have an explicit formula for the inverse:
\begin{eqnarray}
U^{-1}=\frac{C_{(N-1)}-U_{(N-1)}}{\Det(U)}.\label{U-1G}
\end{eqnarray}
\end{cor}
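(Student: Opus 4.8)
The plan is to derive the inverse formula directly from the Faddeev--LeVerrier recursion established in Theorem \ref{FLG}, exactly as one does in the classical matrix case, and then transfer it to $\cl^{\frac{1}{m}}_d$ via the faithful representation $\gamma$. The key observation is that formula (\ref{LFG}) already tells us $\Det(U)=U(C_{(N-1)}-U_{(N-1)})$, so once we know $\Det(U)\neq 0$ we can simply divide by the scalar $\Det(U)\in\BC$. Thus the entire statement follows from (\ref{LFG}) by elementary algebra: if $\Det(U)=U(C_{(N-1)}-U_{(N-1)})$ is a nonzero complex number, then $U\cdot\frac{C_{(N-1)}-U_{(N-1)}}{\Det(U)}=e$, which exhibits a right inverse.

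First I would note that invertibility in $\cl^{\frac{1}{m}}_d$ is equivalent to invertibility of $\gamma(U)$ in the matrix algebra, since $\gamma$ is a faithful representation (an algebra isomorphism onto its image). Because $\Det(U):=\det(\gamma(U))$ by definition (\ref{detG}), the hypothesis $\Det(U)\neq 0$ is precisely the statement that $\gamma(U)$ is an invertible matrix, hence $U$ is invertible in $\cl^{\frac{1}{m}}_d$. This handles the existence of $U^{-1}$ cleanly and justifies that the right inverse produced above is the genuine two-sided inverse.

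Next I would combine the two facts. From (\ref{LFG}), the element $C_{(N-1)}-U_{(N-1)}$ satisfies $U(C_{(N-1)}-U_{(N-1)})=\Det(U)\,e$. Dividing both sides by the nonzero scalar $\Det(U)$ gives
\begin{eqnarray*}
U\cdot\frac{C_{(N-1)}-U_{(N-1)}}{\Det(U)}=e,
\end{eqnarray*}
so the right-hand side of (\ref{U-1G}) is a right inverse of $U$. Since $U$ is already known to be invertible, this right inverse coincides with the unique two-sided inverse $U^{-1}$, which establishes (\ref{U-1G}).

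I do not expect any genuine obstacle here, since all the substantive content is already contained in Theorem \ref{FLG}; the only point requiring mild care is the logical distinction between producing a one-sided inverse and asserting it equals the true inverse. That gap is closed by the preliminary remark that $\Det(U)\neq 0$ forces $U$ to be invertible through the isomorphism $\gamma$, after which uniqueness of inverses in an associative algebra finishes the argument. One could equally well verify the identity entirely at the matrix level by applying the classical Cayley--Hamilton/Faddeev--LeVerrier inverse formula to $\gamma(U)$ and pulling it back through $\gamma^{-1}$, but invoking (\ref{LFG}) directly is the shortest route.
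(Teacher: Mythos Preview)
Your proposal is correct and follows essentially the same approach as the paper: the corollary is an immediate consequence of the identity $\Det(U)=U(C_{(N-1)}-U_{(N-1)})$ from (\ref{LFG}), and the paper treats it as such without a separate proof. The only minor difference is that the paper remarks $\Det(U)=U\,{\rm Adj}(U)={\rm Adj}(U)\,U$ to get two-sidedness directly, whereas you appeal to invertibility via $\gamma$ and uniqueness of inverses; both are equally valid.
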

Here
\begin{eqnarray}
{\rm Adj}(U):={\rm adj}(\gamma(U))=C_{(N-1)}-U_{(N-1)}\label{adjG}
\end{eqnarray}
is the adjugate of an arbitrary element $U\in\cl^{\frac{1}{m}}_d$. We have
$$\Det(U)=U \,{\rm Adj}(U)={\rm Adj}(U)\, U.$$

We can also introduce the operation $\underline{U}$ to simplify the expressions (\ref{KG}), (\ref{LFG}). Let us have
\begin{eqnarray}
\underline{U}:=\langle U \rangle_0-\sum_{k=1}^{m(d-1)}{\langle U \rangle_k}=2\langle U\rangle_0-U,\qquad \forall U\in\cl^{\frac{1}{m}}_d.\label{underlineG}
\end{eqnarray}
The similar operation is considered in the ordinary (quadratic) Clifford algebras in \cite{Det}.

Using (\ref{trG}) and (\ref{underlineG}), we get
\begin{eqnarray}
\langle U\rangle_0=\frac{1}{2}(U+\underline{U})=\frac{1}{N}(\tr(\gamma(U))\in\BC\label{UovUG}
\end{eqnarray}
for an arbitrary $\gamma$ (\ref{gammaG}).

\begin{thm} \label{thunderG} We have the properties:
\begin{eqnarray}
&&\underline{\underline{U}}=U,\qquad \underline{U+V}=\underline{U}+\underline{V},\qquad \underline{\alpha U}=\alpha \underline{U},\label{proper0G}\\
&&\underline{UV}U=U\underline{VU},\qquad \forall U, V\in \cl^{\frac{1}{m}}_d,\qquad \forall \alpha \in\BC,\label{properG}\\
&&\underline{U \underline{V}}=\underline{U}\, \underline{V} +\underline{U} V-\underline{UV},\qquad \forall U, V\in\cl^{\frac{1}{m}}_d.\label{underG}
\end{eqnarray}
\end{thm}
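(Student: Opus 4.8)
The plan is to derive all three properties directly from the defining identity $\underline{U}=2\langle U\rangle_0-U$ in (\ref{underlineG}), exploiting two facts: that the grade-$0$ projection $\langle\cdot\rangle_0$ is $\mathbb{C}$-linear by (\ref{linearG}), and that $\langle U\rangle_0$ is a complex scalar (an element of $\cl^{\frac{1}{m},0}_d\equiv\mathbb{C}$) and hence central in $\cl^{\frac{1}{m}}_d$. No structural input beyond these and Corollary \ref{cor2} is required, so the work is essentially careful bookkeeping rather than a genuine difficulty.

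First I would dispose of (\ref{proper0G}). Since $\langle U\rangle_0\in\mathbb{C}$, applying $\langle\cdot\rangle_0$ to it returns it unchanged, so $\langle\underline{U}\rangle_0=2\langle U\rangle_0-\langle U\rangle_0=\langle U\rangle_0$; substituting this into $\underline{\underline{U}}=2\langle\underline{U}\rangle_0-\underline{U}$ gives $\underline{\underline{U}}=U$. Additivity $\underline{U+V}=\underline{U}+\underline{V}$ and homogeneity $\underline{\alpha U}=\alpha\underline{U}$ are then immediate from the linearity of $\langle\cdot\rangle_0$ in (\ref{linearG}), since $\underline{\phantom{U}}$ is a $\mathbb{C}$-linear combination of the identity and $\langle\cdot\rangle_0$.

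Next I would prove (\ref{properG}). Expanding the left side gives $\underline{UV}\,U=(2\langle UV\rangle_0-UV)U=2\langle UV\rangle_0\,U-UVU$, where the scalar $\langle UV\rangle_0$ has been moved across $U$; expanding the right side gives $U\,\underline{VU}=U(2\langle VU\rangle_0-VU)=2\langle VU\rangle_0\,U-UVU$. These two expressions differ only in $\langle UV\rangle_0$ versus $\langle VU\rangle_0$, which coincide by Corollary \ref{cor2}; hence they are equal. This is the one place where an earlier result (the trace-cyclicity Corollary \ref{cor2}) is actually invoked.

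Finally, for (\ref{underG}) I would expand both sides in terms of $\langle U\rangle_0$, $\langle V\rangle_0$, $\langle UV\rangle_0$, $U$, $V$, and $UV$, again using centrality of $\langle V\rangle_0$ to compute $U\underline{V}=2\langle V\rangle_0 U-UV$ and $\langle U\underline{V}\rangle_0=2\langle U\rangle_0\langle V\rangle_0-\langle UV\rangle_0$. The left side $\underline{U\underline{V}}$ then equals $4\langle U\rangle_0\langle V\rangle_0-2\langle UV\rangle_0-2\langle V\rangle_0 U+UV$, while the right side $\underline{U}\,\underline{V}+\underline{U}V-\underline{UV}$ reduces to the same expression after the $\langle U\rangle_0 V$ and $UV$ terms cancel in pairs. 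The only thing to watch is keeping the noncommuting products $U$, $V$, $UV$ in the correct order while freely commuting the scalar grade-$0$ parts past them; I expect no essential obstacle here, only the routine risk of a sign or ordering slip in the term-by-term comparison.
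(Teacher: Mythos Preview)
Your proof is correct and rests on exactly the same ingredients as the paper's: the definition $\underline{U}=2\langle U\rangle_0-U$, linearity of $\langle\cdot\rangle_0$, centrality of the scalar $\langle U\rangle_0$, and Corollary~\ref{cor2} for (\ref{properG}). The only difference is organizational. For (\ref{properG}) the paper observes that $UV+\underline{UV}=2\langle UV\rangle_0=2\langle VU\rangle_0=VU+\underline{VU}$ is a scalar and then multiplies this scalar identity by $U$ on the right of the left side and on the left of the right side, giving the result in one line. For (\ref{underG}) the paper avoids your term-by-term expansion by writing $\underline{U\underline{V}}+\underline{UV}=\underline{U(\underline{V}+V)}$ via linearity, then noting that $\underline{V}+V=2\langle V\rangle_0$ is scalar so that $\underline{U\cdot(\text{scalar})}=\underline{U}\cdot(\text{scalar})$, whence the expression equals $\underline{U}(\underline{V}+V)=\underline{U}\,\underline{V}+\underline{U}V$. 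Your brute-force expansion verifies the same identity and has the virtue of being entirely mechanical; the paper's version is shorter and makes the role of the scalar $\underline{V}+V$ more visible.
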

\begin{proof} The first three properties (\ref{proper0G}) are trivial; see also (\ref{linearG}).

Let us prove the property (\ref{properG}). Using (\ref{UVVUG}) and (\ref{UovUG}), we get
$$UV+\underline{UV}=VU+\underline{VU}\in{\mathbb C}.$$
Multiplying the left side on the right by $U$ and the right side on the left by $U$, we get $\underline{UV}U=U\underline{VU}$. 

Let us prove the property (\ref{underG}). Using (\ref{proper0G}) and (\ref{UovUG}), we get
\begin{eqnarray}
\underline{U \underline{V}}+\underline{UV}=\underline{U \underline{V}+UV}=\underline{U(\underline{V}+V)}=\underline{U}(\underline{V}+V)=\underline{U}\, \underline{V} +\underline{U} V.
\end{eqnarray}
The theorem is proved. 
\end{proof}
The formulas for the determinant and other characteristic coefficients (\ref{KG}), (\ref{LFG}) can be simplified using the relation (\ref{UovUG}) and the properties (\ref{proper0G}), (\ref{properG}), (\ref{underG}). Namely, the property (\ref{underG}) effectively allows to flatten nested operations of type (\ref{underlineG}). Substituting $U=V$ into the property (\ref{underG}), we get
\begin{eqnarray}
\underline{U \underline{U}}=(\underline{U})^2+\underline{U} U-\underline{U^2},\qquad \forall U\in\cl^{\frac{1}{m}}_d\label{under2}.
\end{eqnarray}

\begin{ex}
For an arbitrary element $U\in\cl^\frac{1}{3}_2$ (\ref{UU}), we get the following explicit formula for the determinant
\begin{eqnarray}
&&\Det(U)=u_{00}^3+u_{10}^3+u_{01}^3+u_{20}^3+u_{02}^3+u_{11}^3+u_{21}^3+u_{12}^3+u_{22}^3\label{det0}\\
&&-3(u_{00}u_{01}u_{02}
+u_{10}u_{11}u_{12}+u_{00}u_{10}u_{20}+u_{01}u_{11}u_{21}+u_{02}u_{12}u_{22}\nonumber\\
&&+u_{20}u_{21}u_{22})-3\omega(u_{01}u_{12}u_{20}+u_{02}u_{10}u_{21}+u_{00}u_{11}u_{22})\nonumber\\
&&-3\omega^2(u_{02}u_{11}u_{20}+u_{00}u_{12}u_{21}+u_{01}u_{10}u_{22})\in\mathbb{C}.\nonumber
\end{eqnarray}
Using our techniques, we obtain another explicit formula for the determinant instead of (\ref{det0}), which is basis-free. The characteristic polynomial of $U\in\cl^\frac{1}{3}_2$
\begin{eqnarray}
\varphi_U(\lambda):=\Det(\lambda e - U)=\lambda^3-C_{(1)}\lambda^2-C_{(2)}\lambda-C_{(3)}\in{\mathbb C},\qquad \lambda\in{\mathbb C},
\end{eqnarray}
has three coefficients $C_{(k)}=C_{(k)}(U)\in\cl^{\frac{1}{3}, 0}_2\equiv\BC$, $k=1, 2, 3$, with the following basis-free formulas: 
\begin{eqnarray}
&&\tr(U)=C_{(1)}=\frac{3}{2}(U+\underline{U})\in\BC,\\
&&C_{(2)}=-\frac{3}{8}(U^2+\underline{U^2}+3U\underline{U}+3\underline{U \underline{U}})\in\BC,\label{C2}\\
&&\Det(U)=C_{(3)}=\frac{1}{8}U(-U^2-3U\underline{U}+3\underline{U^2}+9\underline{U\underline{U}})\in\BC.\label{det2}
\end{eqnarray}
In particular, all basis elements of $\cl^\frac{1}{3}_2$ have a determinant equal to $1$:
\begin{eqnarray*}
&&\Det(e)=\Det(e_1)=\Det(e_2)=\Det(e_1^2)=\Det(e_2^2)=\Det(e_1e_2)\\
&&=\Det(e_1^2e_2)=\Det(e_1 e_2^2) =\Det(e_1^2 e_2^2)=1.
\end{eqnarray*}
If $\Det(U)\neq 0$ for some $U\in\cl^{\frac{1}{3}}_2$, then $U$ is invertible and we have an explicit formula for the inverse:
\begin{eqnarray}
U^{-1}=\frac{-U^2-3U\underline{U}+3\underline{U^2}+9\underline{U\underline{U}}}{8\Det(U)}.\label{U-1}
\end{eqnarray}
Note that
\begin{eqnarray}
{\rm Adj}(U):={\rm adj}(\gamma(U))=\frac{1}{8}(-U^2-3U\underline{U}+3\underline{U^2}+9\underline{U\underline{U}})\label{adj}
\end{eqnarray}
is the adjugate of an arbitrary element $U\in\cl^{\frac{1}{3}}_2$ and
$$
\Det(U)=U \,{\rm Adj}(U)={\rm Adj}(U)\, U.
$$
Using (\ref{under2}), the formulas (\ref{C2}), (\ref{det2}), (\ref{U-1}), (\ref{adj}) can be rewritten in the following equivalent form:
\begin{eqnarray}
&&C_{(2)}=-\frac{3}{8}(U^2-2\underline{U^2}+6U\underline{U}+3(\underline{U})^2)\in\BC,\label{C2q}\\
&&\Det(U)=C_{(3)}=\frac{1}{8}U(-U^2-6\underline{U^2}+6U\underline{U}+9(\underline{U})^2)\in\BC,\label{det2q}\\
&&U^{-1}=\frac{-U^2-6\underline{U^2}+6U\underline{U}+9(\underline{U})^2}{8\Det(U)},\label{U-1q}\\
&&{\rm Adj}(U)=\frac{1}{8}(-U^2-6\underline{U^2}+6U\underline{U}+9(\underline{U})^2)\label{adjq}.
\end{eqnarray}
\end{ex}
\section{Special unitary Lie groups and Lie algebras}\label{sec8}

Let us consider the special unitary group in $\cl^\frac{1}{m}_d$
\begin{eqnarray}
{\rm SU}\cl^\frac{1}{m}_d:=\{U\in\cl^\frac{1}{m}_d:\quad \overline{U} U=e,\quad \Det(U)=1\},\label{SUclG}
\end{eqnarray}
where $\Det$ is defined by (\ref{LFG}).
\begin{thm}\label{thSunitG} We have the following isomorphism of the Lie group (\ref{SUclG}) and the classical matrix unitary Lie groups in the case of even $d$:
\begin{eqnarray}
{\rm SU}\cl^\frac{1}{m}_d\simeq
\rm{SU}(m^{\frac{d}{2}}).
\end{eqnarray}
\end{thm}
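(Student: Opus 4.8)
The plan is to transport both defining conditions of ${\rm SU}\cl^\frac{1}{m}_d$ through the fixed faithful representation $\beta$ and to recognize the image as the classical matrix group ${\rm SU}(N)$ with $N=m^{\frac{d}{2}}$. For even $d$ the map $\beta:\cl^\frac{1}{m}_d\to\Mat(m^{\frac{d}{2}},\BC)$ of (\ref{betaG}) is an algebra isomorphism, hence in particular a linear bijection that is multiplicative and sends $e$ to $I_N$; this is exactly what underlies the isomorphism ${\rm U}\cl^\frac{1}{m}_d\simeq{\rm U}(m^{\frac{d}{2}})$ of Theorem \ref{thUnitG}. So it suffices to check that $\beta$ restricts to a bijection between ${\rm SU}\cl^\frac{1}{m}_d$ and ${\rm SU}(N)$.

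First I would invoke Theorem \ref{thHermG}, which gives $\beta(\overline{U})=(\beta(U))^{\rm H}$, so that the condition $\overline{U}U=e$ is equivalent to $(\beta(U))^{\rm H}\beta(U)=I_N$, i.e. to $\beta(U)$ being unitary. Next I would use the definition (\ref{detG}) of the determinant, which for even $d$ and the representation $\beta$ reads $\Det(U)=\det(\beta(U))$ and is well-defined by Lemma \ref{lemG}; hence $\Det(U)=1$ is equivalent to $\det(\beta(U))=1$. Combining the two equivalences, $U\in{\rm SU}\cl^\frac{1}{m}_d$ if and only if $\beta(U)\in{\rm SU}(N)$, so $\beta$ maps ${\rm SU}\cl^\frac{1}{m}_d$ onto ${\rm SU}(N)$ bijectively. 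Being the restriction of an algebra isomorphism, this map is a group homomorphism whose inverse is the restriction of $\beta^{-1}$, and since $\beta$ is linear it is a diffeomorphism; thus it is an isomorphism of Lie groups.

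The step that actually carries the argument is this simultaneous translation of the two defining conditions, and since both are already available from earlier results there is no serious computational obstacle. The one point that genuinely requires the hypothesis is the restriction to even $d$: only then is $\gamma$ (equivalently $\beta$) a single matrix algebra, so that the single constraint $\Det(U)=1$ forces the unique determinant to equal $1$. For odd $d$ the representation (\ref{betaGG}) is block diagonal with $m$ blocks and $\Det(U)$ is the product of the $m$ block determinants, so $\Det(U)=1$ does not isolate ${\rm SU}$ in each factor and the resulting group is only the codimension-one subgroup of ${\rm U}(m^{\frac{d-1}{2}})^{m}$ cut out by unit total determinant. I would therefore flag explicitly that the clean isomorphism holds for even $d$ only, rather than attempt to force an ${\rm SU}$ statement in the odd case.
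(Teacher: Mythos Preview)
Your argument is correct and is precisely the same approach the paper takes: the paper's proof is the one-line remark that the result follows from Theorem \ref{thUnitG} and the results of Section \ref{sec7}, and your proposal is a careful unpacking of exactly that---using Theorem \ref{thHermG} to translate $\overline{U}U=e$ into unitarity of $\beta(U)$ and the definition (\ref{detG}) with Lemma \ref{lemG} to translate $\Det(U)=1$ into $\det(\beta(U))=1$. Your added remark explaining why the statement is confined to even $d$ is a welcome clarification that the paper leaves implicit.
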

\begin{proof} Theorem follows from Theorem \ref{thUnitG} and the results of Section \ref{sec7}.
\end{proof}
\begin{cor}\label{cor3}
The Lie algebra
\begin{eqnarray}
\mathfrak{su}\cl^\frac{1}{m}_d:=\{U\in\cl^\frac{1}{m}_d:\quad \overline{U} =-U,\quad \langle U \rangle_0=0\}
\end{eqnarray}
corresponds to the Lie group (\ref{SUclG}) and is isomorphic to the classical matrix unitary Lie algebra in the case of even $d$:
\begin{eqnarray}
\mathfrak{su}\cl^\frac{1}{m}_d\simeq
\mathfrak{su}(m^{\frac{d}{2}}).
\end{eqnarray}
\end{cor}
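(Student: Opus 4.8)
The plan is to mirror the proof of Corollary \ref{cor1}, appending the one extra linear condition coming from the determinant-one constraint. First I would recover the Lie algebra of the group (\ref{SUclG}) by differentiating its two defining equations along a smooth curve $U(t)$ with $U(0)=e$, writing $X:=\dot U(0)$ for the tangent vector. Differentiating the unitarity condition $\overline{U(t)}\,U(t)=e$ at $t=0$ and using $\overline{e}=e$ together with the properties (\ref{pr1G}), (\ref{pr2G}) of the Hermitian conjugation gives $\overline{X}+X=0$, that is $\overline{X}=-X$; this is exactly the computation already behind Corollary \ref{cor1}.

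Next I would linearize the condition $\Det(U(t))=1$. Since $\Det(U)=\det(\gamma(U))$ and $\gamma(U(0))=I_N$, differentiating $\det(\gamma(U(t)))=1$ at $t=0$ via the Jacobi (Liouville) formula yields $\tr(\gamma(X))=0$. By Lemma \ref{lem1} we have $\tr(\gamma(X))=N\langle X\rangle_0$ with $N=m^{d/2}$ for even $d$, so this condition is equivalent to $\langle X\rangle_0=0$. Combining the two linearized constraints, the tangent space at $e$ is precisely $\{X\in\cl^\frac{1}{m}_d:\ \overline{X}=-X,\ \langle X\rangle_0=0\}$, which is the stated definition of $\mathfrak{su}\cl^\frac{1}{m}_d$; this proves that this set is the Lie algebra attached to (\ref{SUclG}).

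To establish the isomorphism $\mathfrak{su}\cl^\frac{1}{m}_d\simeq\mathfrak{su}(m^{d/2})$ for even $d$, I would transport the two defining conditions through the algebra isomorphism $\beta:\cl^\frac{1}{m}_d\to\Mat(m^{d/2},\BC)$. By Theorem \ref{thHermG}, $\overline{X}=-X$ is equivalent to $(\beta(X))^{\rm H}=-\beta(X)$, i.e. $\beta(X)$ anti-Hermitian; by Lemma \ref{lem1}, $\langle X\rangle_0=0$ is equivalent to $\tr(\beta(X))=0$. Hence $\beta$ restricts to a linear bijection of $\mathfrak{su}\cl^\frac{1}{m}_d$ onto $\{A\in\Mat(m^{d/2},\BC):\ A^{\rm H}=-A,\ \tr(A)=0\}=\mathfrak{su}(m^{d/2})$, and since $\beta$ is an algebra isomorphism it carries the commutator $[X,Y]=XY-YX$ to $[\beta(X),\beta(Y)]$, making it a Lie algebra isomorphism. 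Alternatively, once the group isomorphism of Theorem \ref{thSunitG} is available the Lie algebra isomorphism is automatic by functoriality, so the genuine content of the corollary is the explicit basis-free description $\overline{X}=-X$, $\langle X\rangle_0=0$, which the two linearizations supply.

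I expect the only point requiring care to be the determinant-to-trace reduction: one must use that $\beta$ already realizes the minimal-dimension representation, so that $N=m^{d/2}$ and, by Lemma \ref{lemG}, $\Det$ is representation-independent, and then that the linearization of $\Det(U)=1$ produces exactly the grade-$0$ condition $\langle X\rangle_0=0$ rather than some other scalar constraint. Both facts follow directly from Lemma \ref{lem1}, so no real obstacle remains and the remaining verifications are routine.
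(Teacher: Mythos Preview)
Your proposal is correct and follows essentially the same approach as the paper: the paper states the corollary without a separate proof, relying on Theorem \ref{thSunitG} for the group isomorphism (hence the Lie algebra isomorphism by functoriality) together with Theorem \ref{thHermG} and Lemma \ref{lem1} to translate the defining conditions. Your write-up simply makes explicit the linearization step that the paper leaves implicit.
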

The considered Lie groups and Lie algebras have the following (real) dimensions:
\begin{eqnarray}
\dim({\rm SU}\cl^\frac{1}{m}_d)=\dim(\mathfrak{su}\cl^\frac{1}{m}_d)=m^d-1.
\end{eqnarray}
\begin{ex} In the case of $\cl^\frac{1}{3}_2$, we have the Lie group
\begin{eqnarray}
{\rm SU}\cl^\frac{1}{3}_2:=\{U\in\cl^\frac{1}{3}_2:\quad \overline{U} U=e,\quad \Det(U)=1\},\label{SUcl}
\end{eqnarray}
where $\Det$ is defined as 
\begin{eqnarray}
\Det(U):=\frac{1}{8}U(-U^2-3U\underline{U}+3\underline{U^2}+9\underline{U\underline{U}})\in\BC.
\end{eqnarray}
We have the isomorphism to the classical matrix Lie group
\begin{eqnarray}
{\rm SU}\cl^\frac{1}{3}_2\simeq{\rm SU}(3):=\{A\in\Mat(3,{\mathbb C}):\quad A^{\rm H} A=I,\quad \det(A)=1\}.
\end{eqnarray}
The Lie algebra
\begin{eqnarray}
\mathfrak{su}\cl^\frac{1}{3}_2:=\{U\in\cl^\frac{1}{3}_2:\quad \overline{U} =-U,\quad \langle U \rangle_0=0\}
\end{eqnarray}
corresponds to the Lie group (\ref{SUcl}) and is isomorphic to the classical matrix Lie algebra
\begin{eqnarray}
\mathfrak{su}\cl^\frac{1}{3}_2\simeq \mathfrak{su}(3):=\{A\in\Mat(3,{\mathbb C}):\quad A^{\rm H} =-A,\quad \tr(A)=0\}.
\end{eqnarray}
The considered Lie group and Lie algebra have the following (real) dimensions:
\begin{eqnarray}
\dim({\rm SU}\cl^\frac{1}{3}_2)=  \dim(\mathfrak{su}\cl^\frac{1}{3}_2)=8.
\end{eqnarray}
A basis of $\mathfrak{su}\cl^\frac{1}{3}_2$ is 
\begin{eqnarray}
&&\tau_1=e_1-e_1^2,\qquad
\tau_2=i(e_1+e_1^2),\label{newbasis}\\
&&\tau_3=e_2-e_2^2,\qquad \tau_4=i(e_2+e_2^2),\nonumber\\
&&\tau_5=e_1e_2-\omega^2e_1^2e_2^2,\qquad \tau_6=i(e_1e_2+\omega^2 e_1^2 e_2^2),\nonumber\\
&&\tau_7=e_1^2e_2-\omega e_1 e_2^2,\qquad \tau_8=i(e_1^2 e_2+\omega e_1 e_2^2). \nonumber
\end{eqnarray}
For the fixed matrix representation $\beta$ (\ref{beta}), we get
\begin{eqnarray}
&&\beta(\tau_1)=
    \left[\begin{tabular}{ccc}
        $0$ & $1$ &  $-1$\\
        $-1$ & $0$ & $1$\\
        $1$ & $-1$ & $0$\\
    \end{tabular}\right],\quad \beta(\tau_2)=
    \left[\begin{tabular}{ccc}
        $0$ & $i$ & $i$\\
        $i$ & $0$ & $i$\\
        $i$ & $i$ & $0$\\
    \end{tabular}\right],\nonumber\\
 &&\label{basis1}
    \beta(\tau_3)=
    \left[\begin{tabular}{ccc}
        $0$ & $0$ & $0$\\
        $0$ & $\sqrt{3} i$ & $0$\\
        $0$ & $0$ & $-\sqrt{3} i$\\
    \end{tabular}\right],\quad
    \beta(\tau_4)=
    \left[\begin{tabular}{ccc}
        $2i$ & $0$ & $0$\\
        $0$ & $-i$ & $0$\\
        $0$ & $0$ & $-i$\\
    \end{tabular}\right],\\
    &&\beta(\tau_5)=
    \left[\begin{tabular}{ccc}
        $0$ & $\omega$ & $-1$\\
        $-\omega^2$ & $0$ & $\omega^2$\\
        $1$ & $-\omega$ & $0$\\
    \end{tabular}\right],\quad
    \beta(\tau_6)=
     \left[\begin{tabular}{ccc}
        $0$ & $i\omega$ & $i$\\
        $i\omega^2$ & $0$ & $i\omega^2$\\
        $i$ & $i\omega$ & $0$\\
    \end{tabular}\right],\nonumber\\
    &&\beta(\tau_7)=
    \left[\begin{tabular}{ccc}
        $0$ & $-1$ & $\omega^2$\\
        $1$ & $0$ & $-\omega^2$\\
        $-\omega$ & $\omega$ & $0$\\
    \end{tabular}\right],\quad
    \beta(\tau_8)=
    \left[\begin{tabular}{ccc}
        $0$ & $i$ & $i\omega^2$\\
        $i$ & $0$ & $i\omega^2$\\
        $i\omega$ & $i\omega$ & $0$\\
    \end{tabular}\right].\nonumber
\end{eqnarray}
All these matrices are anti-Hermitian $(\beta(\tau_j))^{\rm H}=-\beta(\tau_j)$, $j=1, \ldots, 8$.

In physics, another basis of $\mathfrak{su}(3)$
\begin{eqnarray}
\theta_j=i \lambda_j,\qquad j=1, \ldots, 9\label{GMb}
\end{eqnarray}
constructed using the well-known Gell-Mann matrices \cite{GM}
\begin{eqnarray}
\!\!\!\!\!\!\!\!\!\!&&\lambda_1=\left[\begin{tabular}{ccc}
        $0$ & $1$ & $0$\\
        $1$ & $0$ & $0$\\
        $0$ & $0$ & $0$\\
    \end{tabular}\right],\quad 
    \lambda_2=\left[\begin{tabular}{ccc}
        $0$ & $-i$ & $0$\\
        $i$ & $0$ & $0$\\
        $0$ & $0$ & $0$\\
    \end{tabular}\right],\quad 
    \lambda_3=\left[\begin{tabular}{ccc}
        $1$ & $0$ & $0$\\
        $0$ & $-1$ & $0$\\
        $0$ & $0$ & $0$\\
    \end{tabular}\right],\nonumber\\
    \!\!\!\!\!\!\!\!\!\!&&\lambda_4=\left[\begin{tabular}{ccc}
        $0$ & $0$ & $1$\\
        $0$ & $0$ & $0$\\
        $1$ & $0$ & $0$\\
    \end{tabular}\right],\quad 
    \lambda_5=\left[\begin{tabular}{ccc}
        $0$ & $0$ & $-i$\\
        $0$ & $0$ & $0$\\
        $i$ & $0$ & $0$\\
    \end{tabular}\right],\quad 
    \lambda_6=\left[\begin{tabular}{ccc}
        $0$ & $0$ & $0$\\
        $0$ & $0$ & $1$\\
        $0$ & $1$ & $0$\\
    \end{tabular}\right],\label{GM}\\
    \!\!\!\!\!\!\!\!\!\!&&\lambda_7=\left[\begin{tabular}{ccc}
        $0$ & $0$ & $0$\\
        $0$ & $0$ & $-i$\\
        $0$ & $i$ & $0$\\
    \end{tabular}\right],\quad 
    \lambda_8=\frac{1}{\sqrt{3}}\left[\begin{tabular}{ccc}
        $1$ & $0$ & $0$\\
        $0$ & $1$ & $0$\\
        $0$ & $0$ & $-2$\\
    \end{tabular}\right].\nonumber
\end{eqnarray}
is usually used. We represent the explicit connection between the basis (\ref{basis1}) and the Gell-Mann basis (\ref{GMb}) below.

Note that
\begin{eqnarray}
(\lambda_j)^{\rm H}=\lambda_j,\qquad (\theta_j)^{\rm H}=-\theta_j,\qquad j=0, 1, \ldots, 8.
\end{eqnarray}
Let us use the notation $\beta_j:=\beta(\tau_j)$, $j=1, \ldots, 8$ for the basis (\ref{basis1}). The explicit relation between the bases (\ref{basis1}) and (\ref{GMb}) is the following:
\begin{eqnarray}
&&\beta_1=\theta_2-\theta_5+\theta_7,\qquad \beta_2=\theta_1+\theta_4+\theta_6,\\
&&\beta_3=-\frac{\sqrt{3}}{2}\theta_3+\frac{3}{2}\theta_8,\qquad \beta_4=\frac{3}{2}\theta_3+\frac{\sqrt{3}}{2}\theta_8,\\
&&\beta_5=\frac{\sqrt{3}}{2}\theta_1-\frac{1}{2}\theta_2-\theta_5,\qquad \beta_6=-\frac{1}{2}\theta_1-\frac{\sqrt{3}}{2}\theta_2+\theta_4,\\
&&\beta_7=-\theta_2-\frac{\sqrt{3}}{2}\theta_4-\frac{1}{2}\theta_5+\frac{\sqrt{3}}{2}\theta_6+\frac{1}{2}\theta_7,\\
&&\beta_8=\theta_1-\frac{1}{2}\theta_4+\frac{\sqrt{3}}{2}\theta_5-\frac{1}{6}\theta_6+\frac{\sqrt{3}}{2}\theta_7.
\end{eqnarray}
In the other way:
\begin{eqnarray}
&&\theta_1=-\frac{1}{4\sqrt{3}}\beta_1+\frac{1}{4}\beta_2+\frac{\sqrt{3}}{4}\beta_5-\frac{1}{4}\beta_6-\frac{1}{4\sqrt{3}}\beta_7+\frac{1}{4}\beta_8,\\
&&\theta_2=\frac{1}{4}\beta_1+\frac{1}{4\sqrt{3}}\beta_2-\frac{1}{4}\beta_5-\frac{\sqrt{3}}{4}\beta_6-\frac{1}{4}\beta_7-\frac{1}{4\sqrt{3}}\beta_8,\\
&&\theta_3=-\frac{1}{2\sqrt{3}}\beta_3+\frac{1}{2}\beta_4,\\
&&\theta_4=\frac{1}{4\sqrt{3}}\beta_1+\frac{1}{4}\beta_2+\frac{1}{2}\beta_6-\frac{1}{2\sqrt{3}}\beta_7,\\
&&\theta_5=-\frac{1}{4}\beta_1+\frac{1}{4\sqrt{3}}\beta_2-\frac{1}{2}\beta_5+\frac{1}{2\sqrt{3}}\beta_8,\\
&&\theta_6=\frac{1}{2}\beta_2-\frac{\sqrt{3}}{4}\beta_5-\frac{1}{4}\beta_6+\frac{\sqrt{3}}{4}\beta_7-\frac{1}{4}\beta_8,\\
&&\theta_7=\frac{1}{2}\beta_1-\frac{1}{4}\beta_5+\frac{\sqrt{3}}{4}\beta_6+\frac{1}{4}\beta_7+\frac{\sqrt{3}}{4}\beta_8,\\
&&\theta_8=\frac{1}{2}\beta_3+\frac{1}{2\sqrt{3}}\beta_4.
\end{eqnarray}
Note that when we use the matrix formalism, the Gell-Mann basis (\ref{GMb}) is more useful since all the matrices (\ref{GM}) have only two or three non-zero elements. If we use the formalism of ternary Clifford algebra, then the basis (\ref{newbasis}) is more useful since all the basis elements have only two non-zero coefficients.
\end{ex}

\section{Conclusions}\label{sec9}

In this paper, we present an explicit realization of the unitary and special unitary Lie groups and the corresponding Lie algebras in the generalized Clifford algebras $\cl^{\frac{1}{m}}_d$ (in particular, ternary Clifford algebras $\cl^{\frac{1}{3}}_d$) using the basis-free concepts of the determinant and Hermitian conjugation in $\cl^{\frac{1}{m}}_d$. This realization uses only operations in the generalized Clifford algebras $\cl^{\frac{1}{m}}_d$ and does not use the corresponding matrix representations. We introduce the operation of Hermitian conjugation, the characteristic polynomial, the trace, and the determinant in an arbitrary generalized Clifford algebra. We present basis-free formulas for the inverse and all coefficients of the characteristic polynomial coefficients in the generalized Clifford algebras $\cl^{\frac{1}{m}}_d$. The unitary and special unitary Lie groups and the corresponding Lie algebras are widely used in physics. We expect further development of the theory of ternary and generalized Clifford algebras and their applications in physics, engineering, and computer science.

\section*{Acknowledgements}

The results of this paper were reported at the ENGAGE Workshop (Geneva, Switzerland, July 2024) within the International Conference Computer Graphics International 2024 (CGI 2024). The author is grateful to the organizers and the participants of this conference for fruitful discussions.

The author is grateful to the anonymous reviewers for their careful reading of the paper and helpful comments on how to improve the presentation.

The article was prepared within the framework of the project “Mirror Laboratories” HSE University “Quaternions, geometric algebras and applications”.

\medskip

\noindent{\bf Data availability} Data sharing not applicable to this article as no datasets were generated or analyzed during the current study.

\medskip

\noindent{\bf Declarations}\\
\noindent{\bf Conflict of interest} The authors declare that they have no conflict of interest.
  
\bibliographystyle{spmpsci}

\end{document}